\documentclass[12pt]{article}
\usepackage{amsmath,amsthm,amssymb,color,verbatim,graphicx,fullpage,url}
\usepackage{tcolorbox}
\usepackage{hyperref}
\usepackage{cleveref}
\usepackage{enumerate}
\newcommand{\remove}[1]{}
\sloppy

\newtheorem{theorem}{Theorem}[section]
\newtheorem{claim}[theorem]{Claim}
\newtheorem{lemma}[theorem]{Lemma}
\newtheorem{proposition}[theorem]{Proposition}
\newtheorem{definition}[theorem]{Definition}

\newtheorem{obs}[theorem]{Observation}
\newtheorem{open}{Open problem}

\newcommand{\A}{\mathcal{A}}
\newcommand{\R}{\mathbb{R}}
\newcommand{\B}{\mathbb{B}}
\newcommand{\E}{\mathbb{E}}
\newcommand{\N}{\mathbb{N}}
\newcommand{\ip}[1]{\langle{#1}\rangle}

\newcommand{\sign}{\textrm{sign}}
\newcommand{\eps}{\varepsilon}
\newcommand{\vol}{\text{vol}}
\newcommand{\infer}{\mathsf{InferComp}}
\newcommand{\inference}{\mathsf{infer}}

\title{Generalized comparison trees for point-location problems}

\author{
Daniel M. Kane\thanks{Department of Computer Science and Engineering/Department of Mathematics, University of California, San Diego. {\tt dakane@ucsd.edu} Supported by NSF CAREER Award ID 1553288 and a Sloan fellowship.}
\and Shachar Lovett\thanks{Department of Computer Science and Engineering, University of California, San Diego. {\tt slovett@cs.ucsd.edu.} Research supported by NSF CAREER award 1350481, CCF award 1614023 and a Sloan fellowship.}
\and Shay Moran\thanks{Institute for Advanced Study, Princeton. {\tt  shaymoran@ias.edu.} Research supported by the National Science Foundation under agreement No. CCF-1412958 and by the Simons Foundations.}
}

\begin{document}
\maketitle

\begin{abstract}
Let $H$ be an arbitrary family of hyper-planes in $d$-dimensions. We show that the point-location problem for $H$
can be solved by a linear decision tree that only uses a special type of queries called \emph{generalized comparison queries}.
These queries correspond to hyperplanes that can be written as a linear combination of two hyperplanes from $H$;
in particular, if all hyperplanes in $H$ are $k$-sparse then generalized comparisons are $2k$-sparse.
The depth of the obtained linear decision tree is polynomial in $d$ and logarithmic in $|H|$,
which is comparable to previous results in the literature that use general linear queries.

This extends the study of comparison trees from a previous work by the authors [Kane {et al.}, FOCS 2017].
The main benefit is that using generalized comparison queries allows to
overcome limitations that apply for the more restricted type of comparison queries.

Our analysis combines a seminal result of Forster
regarding sets in isotropic position [Forster, JCSS 2002],
the margin-based inference dimension analysis for comparison queries
from [Kane {et al.}, FOCS 2017], and compactness arguments.
\end{abstract}

\newpage
\section{Introduction}

Let $H \subset \R^d$ be a family of $|H|=n$ hyper-planes.
$H$ partitions $\R^d$ into $O(n^d)$ cells.
The \emph{point-location} problem is to decide, given an input point $x \in \R^d$, to which cell it belongs.
That is, to compute the function
\begin{align*}
\A_H(x) := \left(\sign(\ip{x,h}): h \in H \right) \in \{-1,0,1\}^n.
\end{align*}

A well-studied computation model for this problem is a \emph{linear decision tree} (LDT):
this is a ternary decision tree whose input is $x\in\R^d$ and
its internal nodes $v$ make linear/threshold queries of the form $\sign(\ip{x,q})$ for some $q=q(v) \in \R^d$.
The three children of $v$ correspond to the three possible outputs of the query : ``$-$'',``$0$'',``$+$''.
The leaves of the tree are labeled with $\{-1,0,1\}^n$ with correspondence to the cell in the arrangement that contains $x$.
The complexity of a linear decision tree is its depth, which corresponds to the maximal number of linear queries made on any input.

\paragraph{Comparison queries.}
A \emph{comparison decision tree} is a special type of an LDT, in which all queries are of one of two types:
\begin{itemize}
\item Label query: ``$\sign\left(\ip{x,h}\right) =\text{?}$" for $h \in H$.
\item Comparison query: ``$\sign\left(\ip{x,h'-h''}\right) =\text{?}$" for $h',h'' \in H$.
\end{itemize}

In~\cite{kane2017active} it is shown that when $H$ is ``nice"
then there exist comparison decision trees that computed
$\A_H(\cdot)$ and has nearly optimal depth (up to logarithmic factors).
For example, for any $H \subset \{-1,0,1\}^d$
there is a comparison decision tree
with depth $O(d \log d \log \lvert H\rvert)$.
This is off by a $\log d$ factor from the basic information theoretical lower bound of $\Omega(d \log \lvert H\rvert)$.
Moreover, it is shown there that certain niceness conditions are necessary.
Concretely, they give an example of $H \subset \R^3$
such that any comparison decision tree that computes $\A_H(\cdot)$ requires depth $\Omega(\lvert H\rvert)$.
This raises the following natural problem:
can comparison decision trees be generalized in a way
that allows to handle arbitrary point-location problems?

\paragraph{Generalized comparisons.}
This paper addresses the above question by considering generalized comparison queries.
A generalized comparison query allows to re-weight its terms: namely, it is query of the form
$$
\text{``}\sign\left(\ip{x,\alpha h'-\beta h''}\right) =?\text{"}
$$
for $h',h'' \in H$ and some $\alpha,\beta \in \R$.
Note that it may be assumed without loss of generality that $ |\alpha|+|\beta|=1$.
A generalized comparison decision tree, naturally, is a linear decision tree whose internal linear queries are restricted to be generalized comparisons.
Note that generalized comparison queries include as special cases both label queries (setting $\alpha=1,\beta=0$) and comparison queries (setting $\alpha=\beta=1/2$).

Geometrically, generalized comparisons are
1-dimensional in the following sense:
let $q=\alpha h' - \beta h''$, with $\alpha,\beta\geq 0$ then $q$ lies on the interval connecting $h'$ and $-h''$. If $\alpha$ and $\beta$ have different signs, $q$ lies on an interval between some other $\pm h'$ and $\pm h''$. So comparison queries are linear queries that lies on the projective lines intervals spanned by $\{\pm h: h\in H\}$.
In particular, if each $h\in H$ has sparsity at most $k$ (namely, at most $k$ nonzero coordinates)
then each generalized comparison has sparsity at most $2k$.

Our main result is:
\begin{theorem}[Main theorem]
\label{thm:main}
Let $H \subset \R^d$.
Then there exists a generalized comparison decision tree of depth
$O(d^{4} \log d \log |H|)$ that computes $\A_H(x)$ for every input $x\in\R^d$.
\end{theorem}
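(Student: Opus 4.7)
The plan is to apply Forster's theorem to ``normalize'' $H$ into an isotropic family $\tilde H$, build a comparison decision tree for the point-location problem on $\tilde H$ using the margin-based inference dimension machinery of \cite{kane2017active}, and then translate each comparison query in $\tilde H$ back into a generalized comparison query on the original $H$.

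\textbf{Forster normalization and query translation.} Assume without loss of generality that $H$ spans $\R^d$ (otherwise restrict to the span of $H$). Apply Forster's theorem to obtain an invertible linear map $T\colon\R^d\to\R^d$ such that $\tilde H :=\{Th/\|Th\|_2:h\in H\}$ is in isotropic position, i.e.\ $\sum_{\tilde h\in\tilde H}\tilde h\,\tilde h^{\,T}=(|H|/d)\,I$. A direct computation shows that for every $h,h',h''\in H$ and $x\in\R^d$,
\[
\sign\ip{x,h}=\sign\ip{T^{-T}x,\tilde h},\qquad \sign\ip{T^{-T}x,\,\tilde h'-\tilde h''}=\sign\ip{x,\,\tfrac{1}{\|Th'\|}h'-\tfrac{1}{\|Th''\|}h''}.
\]
After rescaling so that $|\alpha|+|\beta|=1$, the right-hand side is a generalized comparison query on $H$. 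Hence any comparison decision tree that solves point location on $\tilde H$ for the input $y=T^{-T}x$ is simulated, verbatim and with the same depth, by a generalized comparison decision tree that solves point location on $H$ for the input $x$.

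\textbf{Building the comparison tree on the isotropic family.} Here I would adapt the margin-based inference-dimension argument of \cite{kane2017active}. That paper shows that if a family of unit hyperplanes has uniform margin $\gamma$ against the input, then from the labels of a random sub-sample of size $\mathrm{poly}(d,1/\gamma)$ together with pairwise comparisons between them one can infer the labels of a constant fraction of the remaining hyperplanes; iterating gives an inference-dimension bound which, via a generic reduction, translates into depth of the comparison decision tree. For our isotropic $\tilde H$, Forster's identity $\sum \tilde h\tilde h^{\,T}=(|H|/d)\,I$ only guarantees an \emph{averaged} margin of order $1/\sqrt d$, but standard consequences of isotropy (e.g.\ the number of $\tilde h$ that are nearly orthogonal to any fixed direction is small) should give a random $\mathrm{poly}(d)$-subset of $\tilde H$ a margin of $\Omega(d^{-c})$ against most remaining vectors, enough to push the inference step through. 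A careful bookkeeping of the $d$-factors coming from the Forster-to-margin conversion and from the sub-sampling loss should produce the stated depth $O(d^{4}\log d\log|H|)$.

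\textbf{Main obstacle and compactness.} The crux of the argument is the upgrade from Forster's \emph{averaged} margin to a workable margin for a random sub-sample; this will require combining the isotropic identity with a somewhat delicate probabilistic argument, more technical than the sparse-vector analysis of \cite{kane2017active}. Compactness enters at the very end: the inference-dimension machinery speaks only about finite subfamilies of hyperplanes, while we need a decision tree that is correct for every $x\in\R^d$ and commits to finitely many queries at each internal node. A compactness argument over the input space and over the continuous generalized-comparison parameters $\alpha,\beta$ is used to promote these finitary inference statements into a bona fide finite-depth generalized comparison decision tree.
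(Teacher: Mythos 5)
You have the right ingredients---Forster's theorem, the margin-based inference dimension machinery, and a compactness step---and your query translation from comparisons on $\tilde H$ to generalized comparisons on $H$ is correct. But the overall structure has a gap, and the role of compactness is misidentified.

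\textbf{Single Forster application does not suffice.} Your plan is to apply Forster once to get an isotropic $\tilde H$ and then build a \emph{pure comparison} tree for $\tilde H$. The inference machinery works in rounds: you sample a batch, sort, infer a fraction of the remaining labels, \emph{remove the inferred hyperplanes}, and repeat. After the first round, the surviving subset of $\tilde H$ is no longer isotropic, and Claim~\ref{c:manymargin}--style margin guarantees evaporate; there is no reason a comparison tree can continue to make progress, and indeed \cite{kane2017active} shows that for some families comparison trees are provably bad. The paper resolves this by re-applying Forster \emph{at every iteration} to the remaining set $H_i$: it finds a transformation $T_i$ for $H_i$, and uses comparison queries on $H_i' = \{h/\|T_i h\|_2 : h \in H_i\}$, which are still generalized comparisons on the \emph{original} $H$ since every $h\in H_i\subseteq H$. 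So you are not building a comparison tree on $\tilde H$ but a \emph{generalized comparison} tree directly on $H$, with a fresh scaling per round. The ``upgrade from averaged margin to workable margin'' you flag as the crux is actually a one-line consequence of isotropy (Claim~\ref{c:manymargin}); the real delicacy is the per-round re-scaling combined with a derandomization.

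\textbf{Missing derandomization, and the extra $d$.} The iterative argument as you sketch it gives a \emph{randomized} tree of expected depth $O(d^3\log d\log|H|)$ (\Cref{lem:genposition}): $O(d\log|H|)$ rounds, each with $O(d^2\log d)$ queries. The stated bound $O(d^4\log d\log|H|)$ has an extra factor of $d$ that comes entirely from derandomizing via a double-sampling argument (\Cref{lem:genposdet}, \Cref{lem:derand}): one shows there is a fixed $S\subseteq H_i'$ of size $O(d^3\log d)$ whose comparisons infer an $\Omega(1/d)$ fraction of $H_i$ for \emph{every} $x$ simultaneously, using a Sauer-lemma union bound over the cells of $S\cup(S-S)$. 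Your proposal says nothing about deterministic query choices, yet a decision tree requires them.

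\textbf{Compactness is for general position, not for finitariness.} Forster's theorem (\Cref{thm:forster}) requires $H$ to be in general position. Your ``without loss of generality $H$ spans $\R^d$'' does not give this. The paper's compactness argument (\Cref{lem:almosteverywhere}) approximates an arbitrary $H$ by general-position families $H_n\to H$, takes a subsequence of isomorphic trees $T_{n_k}$ whose queries converge, and shows the limit tree $T_\infty$ computes $\A_H$ \emph{almost everywhere} (on full-dimensional leaves); a separate lemma (\Cref{lem:LDT_fix}) then repairs the zero-measure leaves. The derandomized tree is already finite and correct for all $x$ when $H$ is in general position, so compactness is not needed to ``promote finitary inference statements''; it is needed precisely to drop the general-position hypothesis. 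You should also verify, as you allude, that the converging sequence of generalized-comparison coefficients is bounded (the paper normalizes $|\alpha|+|\beta|=1$ for exactly this reason).
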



\paragraph{Why consider generalized comparisons?}
We consider generalized comparisons for a number of reasons:
\begin{itemize}
\item The lower bound against comparison queries in \cite{kane2017active} was achieved by essentially scaling different elements of $H \subset \R^3$
with exponentially different scales. Allowing for re-scaling (which is what generalized comparisons allow to do) solves this problem.
\item Generalized comparisons may be natural from a machine learning perspective,
in particular in the context of active learning.
A common type of queries used in practice it to give a score
to an example (say 1-10), and not just label it as positive (+) or negative (-). Comparing the scores for different examples can be viewed as a ``coarse" type of generalized comparisons.
\item If the set of original hyperplanes $H$ was ``nice", then generalized comparisons maintain some aspects of niceness in the queries performed.
As an example that was already mentioned, if all hyperplanes in $H$ are $k$-sparse then generalized comparisons are $2k$-sparse. This is part of
a more general line of research, studying what types of ``simple queries" are sufficient to obtain efficient active learning algorithms, or equivalently
efficient linear decision trees for point-location problems.
\end{itemize}

\subsection{Proof outline}
Our proof consists of two parts.
First, we focus on the case when $H \subset \R^d$ is in \emph{general position},
namely, every $d$ vectors in it are linearly independent.
Then, we extend the construction to arbitrary $H$.

The second part is fairly abstract and is derived via compactness arguments.
The technical crux lies in the first part:
let $H\subseteq \R^d$ be in general position;
we first construct a randomized generalized comparison decision tree for $H$,
and then derandomize it.
The randomized tree is simple to describe: it proceeds by steps,
where in each step about $d^2$ elements from $H$ are drawn,
labelled, and sorted using generalized comparisons.
Then, it is shown that  the labels of some $1/d$-fraction
of the remaining elements in $H$ are inferred, on average.
The inferred vectors are then removed from $H$
and this step is repeated until all labels in $H$ are inferred.

A central technical challenge lies in the analysis of a single step.
It hinges on a result by Forster~\cite{forster02linear}
that transforms a general-positioned $H$
to an \emph{isotropic-positioned} $H'$ (see formal definition below)
in a way that comparison queries on $H'$ correspond to generalized comparison queries on $H$.
Then, since $H'$ is in isotropic position, it follows that a significant fraction of $H'$
has a large margin with respect to the input $x$.
This allows us to employ a variant of the margin-based inference analysis by~\cite{kane2017active} on $H'$
to derive the desired inference of some $\Omega(\frac{1}{d})$-fraction of
the remaining labels in each step.

The derandomization of the above randomized LDT is achieved by a
double-sampling argument due to~\cite{vapnik1971uniform}. A similar argument
was used in \cite{kane2017active}, however here several new technical challenges
arise, as in each iteration in the above randomized algorithm, we only label a small fraction of the elements on average.

\subsection{Related work}
The point-location problem has been studied since the 1980s, starting from the pioneering work of
Meyer auf der Heide~\cite{meyer1984polynomial}, Meiser~\cite{meiser1993point}, Cardinal et al.~\cite{cardinal2015solving}
and most recently Ezra and Sharir~\cite{ezra2017decision}. This last work, although not formally stated as such,
solves the point-location problem for an arbitrary $H \subset \R^d$ by a linear decision tree whose depth is $O(d^2 \log d \log |H|)$.
However, in order to do so, the linear queries used by the linear decision tree could be arbitrary, even when the original family $H$
is very simple (say $3$-sparse). This is true for all previous works, as they are all based on various geometric partitioning ideas,
which may require the use of quite generic hyperplanes. This should be compared with our results (\Cref{thm:main}). We obtain
a linear decision tree of a bigger depth (by a factor of $d^2$), however the type of linear queries we use remain relatively simple;
e.g., as discussed earlier, they are 1-dimensional and preserve sparseness.

\subsection{Open problems}
Our work addresses a problem raised in \cite{kane2017near}, of whether ``simple queries" can be sufficient
to solve the point-location problem for general hyperplanes $H$, without making any ``niceness" assumptions on $H$.
The solution explored here is to allow for generalized comparisons, which are a $1$-dimensional set of allowed queries.
An intriguing question is whether this is necessary, or whether there are some $0$-dimensional gadgets that
would be sufficient.

In order to formally define the problem, we need the notion of \emph{gadgets}.
A $t$-ary gadget in $\R^d$ is a function $g:(\R^d)^t \to \R^d$. Let $G=\{g_1,\ldots,g_r\}$
be a finite collection of gadgets in $\R^d$. Given a set of hyperplanes $H \subset \R^d$,
a $G$-LDT that solves $\A_H(\cdot)$ is a LDT where any linear query is of the form $\sign(\ip{q,\cdot})$ for
$q=g(h_1,\ldots,h_t)$ for some $g \in G$ and $h_1,\ldots,h_t \in H$. For example,
a comparison decision tree corresponds to the gadgets $g_1(h)=h$ (label queries) and $g_2(h_1,h_2) = h_1-h_2$ (comparison queries).
A generalized comparison decision tree corresponds to the $1$-dimensional (infinite) family of gadgets
$\{g_{\alpha}(h_1,h_2)=\alpha h_1 - (1-\alpha) h_2: \alpha \in [0,1]\}$.
It was shown in \cite{kane2017active} that comparison decision trees are sufficient to efficiently solve
the point-location problem in 2 dimensions, but not in 3 dimensions. So, the problem is already open in $\R^3$.

\begin{open}
\label{open:gadget}
Fix $d \ge 3$. Is there a finite set of gadgets $G$ in $\R^d$, such that
for every $H \subset \R^d$ there exists a $G$-LDT which computes $\A_H(\cdot)$, whose depth is logarithmic in $|H|$?
Can one hope to get to the information theoretic lower bound, namely to $O(d \log |H|)$?
\end{open}

Another open problem is whether randomized LDT can always be derandomized, without losing too much in the depth.
To recall, a randomized (zero-error) LDT is a distribution over (deterministic) LDTs which each computes $\A_H(\cdot)$. The measure
of complexity for a randomized LDT is the expected number of queries performed, for the worst-case input $x$.
The derandomization technique we apply in this work (see \Cref{lem:genposdet} and its proof for details) loses a factor of $d$, but it is not clear whether this
loss is necessary.

\begin{open}
\label{open:derand}
Let $H \subset \R^d$. Assume that there exists a randomized LDT which computes $\A_H(\cdot)$, whose expected query complexity
is at most $D$ for any input. Does there always exist a (deterministic) LDT which computes $\A_H(\cdot)$, whose depth is $O(D)$?
\end{open}

\section{Preliminaries and some basic technical lemmas}

\subsection{Linear decision trees}
Let  $T$ be a linear decision tree defined on input points $x\in \R^d$.
For a vertex $u$ of $T$ let $C(u)$ denote the set of inputs $x$ whose computation path
contains $u$. Let  $Z(u)$ denote the queries ``$\sign(\ip{x,q})=?$'' on the path from the root to $u$
that are replied by ``$0$'', and let $V(u)\leq \R^d$ denote the subspace
$\left\{x : \langle x, q\rangle = 0, \forall q\in Z(u) \right\}$.
We say that $u$ is full dimensional if $\dim\left(V(u)\right)=d$
(i.e.\ no query on the path towards $u$ is replied by a $0$).
\begin{obs}\label{obs:convex}
$C(u)$ is convex (as an intersection of open halfspaces and hyperplanes).
\end{obs}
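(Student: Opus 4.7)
The plan is to unpack the definition of $C(u)$ in terms of the queries made along the root-to-$u$ path, and then observe that each such query contributes a convex constraint on $x$. The convexity of $C(u)$ then follows immediately from the elementary fact that an arbitrary intersection of convex sets is convex.

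More concretely, I would first enumerate the internal nodes $v_1,\ldots,v_m$ on the path from the root of $T$ to $u$, with associated query vectors $q_1,\ldots,q_m \in \R^d$, and let $s_i \in \{-,0,+\}$ denote the label of the outgoing edge from $v_i$ that lies on the path to $u$. By definition of the linear decision tree, $x \in C(u)$ if and only if $\sign(\langle x, q_i\rangle) = s_i$ for all $i = 1,\ldots,m$. Thus
\[
C(u) = \bigcap_{i=1}^m \bigl\{x\in\R^d : \sign(\langle x,q_i\rangle) = s_i\bigr\}.
\]
Each set in the intersection is either the open halfspace $\{\langle x,q_i\rangle>0\}$, the open halfspace $\{\langle x,q_i\rangle<0\}$, or the hyperplane $\{\langle x,q_i\rangle=0\}$, depending on the value of $s_i$.

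The only step requiring any justification is that each such halfspace or hyperplane is convex, which is immediate from linearity of the map $x \mapsto \langle x,q_i\rangle$. Invoking the standard fact that a finite intersection of convex subsets of $\R^d$ is convex then finishes the argument. There is no real obstacle here; the observation is essentially a restatement of how a linear decision tree partitions its input space, and is recorded mainly to be cited later when reasoning about the geometry of the cells $C(u)$.
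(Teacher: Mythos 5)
Your argument is exactly the one the paper has in mind: the parenthetical "(as an intersection of open halfspaces and hyperplanes)" is the paper's entire justification, and you have simply spelled it out correctly. Nothing is missing, and there is no divergence from the intended proof.
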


\begin{obs}\label{obs:open}
$C(u)\subseteq V(u)$ and is open with respect to $V(u)$
(that is, it is the intersection of an open set in $\R^d$ with $V(u)$).
\end{obs}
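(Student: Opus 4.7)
The plan is to unpack the definition of $C(u)$ as the intersection of the constraints imposed by each query along the root-to-$u$ path, and then separate those constraints into equality constraints (contributing to $V(u)$) and strict inequality constraints (contributing openness).

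First I would argue the containment $C(u)\subseteq V(u)$. Fix any $x\in C(u)$. By definition, the computation path of $x$ passes through $u$, so for every query ``$\sign(\langle x,q\rangle)=?$'' on the path from the root to $u$ that was answered by ``$0$'', we must indeed have $\langle x,q\rangle=0$. Thus $x$ satisfies the defining equations of $V(u)$, giving $x\in V(u)$.

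Next I would establish openness in $V(u)$. Consider each query on the path from the root to $u$. If the reply was ``$0$'', the constraint it imposes is $\langle x,q\rangle=0$, which by definition is incorporated into $V(u)$. If the reply was ``$+$'' or ``$-$'', the constraint is the open half-space $\langle x,q\rangle>0$ or $\langle x,q\rangle<0$. Let $U\subseteq\R^d$ denote the intersection of all these open half-spaces (a finite intersection of open sets, hence open in $\R^d$). Then by construction $C(u)=U\cap V(u)$, which is precisely the statement that $C(u)$ is the intersection of an open set in $\R^d$ with $V(u)$, i.e.\ open with respect to $V(u)$.

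No real obstacle is anticipated here; the observation is essentially a bookkeeping exercise matching the definitions of $C(u)$, $Z(u)$, and $V(u)$. The only point that warrants care is making sure that the ``$0$''-replies are exactly the equations cut out by $V(u)$, while the ``$\pm$''-replies contribute strict inequalities, so that the decomposition $C(u)=U\cap V(u)$ with $U$ open is immediate.
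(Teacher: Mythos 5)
Your proof is correct and is precisely the intended unpacking of the definitions of $C(u)$, $Z(u)$, and $V(u)$; the paper states this as an unproved observation, and your argument is the standard one the authors had in mind.
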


We say that $T$ \emph{computes} $\sign(\ip{h,\cdot})$ if for every leaf $\ell$ of $T$,
the restriction of the function $x \mapsto \sign(\ip{h,x})$ to $C(\ell)$ is constant.
Thus, $T$ computes $\A_H(\cdot)$ if and only if it computes $\sign(\ip{h,\cdot})$ for all $h\in H$.
We say that $T$ \emph{computes} $\sign(\ip{h,\cdot})$ \emph{almost everywhere} if the restriction of
$x \mapsto \sign(\ip{h,x})$ to~$C(\ell)$ is constant, for every {full dimensional} leaf $\ell$.

We will use the following corollary of Observations \ref{obs:convex} and \ref{obs:open}. It shows that if
$\sign(h,\cdot)$ is not constant in $C(u)$ then it must take all three possible values.
In \Cref{sec:proof4}, we show that a linear decision tree that computes $\A_H(\cdot)$ almost everywhere can be
``exteneded" to a LDT that computes $\A_H(\cdot)$ everywhere, without increasing the depth or introducing new queries.
It relies on the following lemma.

\begin{lemma}\label{lem:all}
Let $u$ be a vertex in $T$, and assume that the restriction of $x\mapsto\sign(\ip{h,x})$
to $C(u)$ is not constant. Then there exist $x_{-1},x_0,x_{+1}\in C(u)$
such that $\sign(\ip{h,x_i}) = i$ for every $i\in\{-1,0,+1\}$.
\end{lemma}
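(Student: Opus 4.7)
The plan is to restrict the linear functional $\phi(x) := \langle h, x\rangle$ to the linear subspace $V(u)$, and combine Observations~\ref{obs:convex} and~\ref{obs:open} to show that the image $I := \phi(C(u)) \subseteq \R$ is an open interval that meets each of $(-\infty,0)$, $\{0\}$ and $(0,\infty)$.

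First I would establish that $I$ is an open interval. It is an interval because $C(u)$ is convex (Observation~\ref{obs:convex}) and $\phi$ is linear. Next, the restriction $\phi|_{V(u)}$ cannot be identically zero; otherwise $\sign\langle h, x\rangle$ would equal $0$ on all of $V(u) \supseteq C(u)$, contradicting the hypothesis that it is non-constant on $C(u)$. Hence $\phi|_{V(u)}$ is a nontrivial linear functional on a finite-dimensional space, and is therefore surjective and open as a map onto $\R$. Since $C(u)$ is open in $V(u)$ by Observation~\ref{obs:open}, its image $I$ is open in $\R$.

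It then remains to verify that the open interval $I$ meets all three sign classes. By hypothesis $\sign$ is non-constant on $I$, so $I$ contains values of at least two distinct signs. If $I$ contains both a strictly positive and a strictly negative value, then $0\in I$ by convexity of $I$. Otherwise $I$ contains $0$ together with values of only one strict sign, and in this case openness of $I$ forces $I$ to contain a small neighborhood of $0$, hence also values of the opposite strict sign. Either way $I$ meets $(-\infty,0)$, $\{0\}$ and $(0,\infty)$, which supplies the desired $x_{-1},x_0,x_{+1}\in C(u)$.

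The only mildly delicate point is the openness of $I$, which is where Observation~\ref{obs:open} is used in an essential way: without it one only gets that $I$ is an interval, and one could miss a sign when non-constancy of $\sign\phi$ is witnessed by a pair like $\{0,+1\}$. The rest of the argument is a routine convexity/continuity exercise.
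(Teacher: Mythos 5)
Your proof is correct and in substance matches the paper's: both rest on Observation~\ref{obs:convex} (to get the intermediate value $0$ when both strict signs appear) and Observation~\ref{obs:open} (to perturb off a zero when only one strict sign appears). The only difference is presentational — you reduce to a one-dimensional picture by passing to the image $I=\phi(C(u))$, an open interval, whereas the paper argues directly with pairs of points $x',x''$ and an explicit perturbation $x'-\eps x''$; your packaging is slightly cleaner but is the same argument.
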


\begin{proof}
Let $x',x''\in C(u)$ with $\sign(\ip{h,x'})\neq\sign(\ip{h,x''})$.
If $\{\sign(\ip{h,x'}),\sign(\ip{h,x''})\} = \{\pm1\}$
then by continuity of $x\mapsto \sign(\ip{h,x})$ there exists some
$x_0$ on the interval between $x',x''$ such that $\sign(\ip{h,x_0})=0$,
and $x_0\in C(u)$ by convexity.

Else, without loss of generality, $\sign(\ip{h,x'})=0$ and $\sign(\ip{h,x''}) = +1$.
Therefore, since $C(u)$ is open relative to $V(u)$:
\[x'-\eps\cdot x''\in C(u)\]
for some small $\eps>0$.
This finishes the proof since $\sign(\ip{h,x' - \eps\cdot x'' })=-1$.
\end{proof}

\subsection{Inferring from comparisons}
{Let $x,h\in \R^d$ and let $S\subseteq \R^d$.
\begin{definition}[Inference]
{We say that {$S$ infers $h$ at $x$}
if $\sign(\ip{h,x})$ is determined by
the linear queries $\sign(\ip{h',x})$ for $h' \in S$.
That is, if for any point $y$ in the set
$$
\left\{y \in \R^d: \sign(\ip{h',y})=\sign(\ip{h',x}) \quad \forall h' \in S\right\}
$$
it holds that $\sign(\ip{h,y})=\sign(\ip{h,x})$.}
Define
$$\inference(S;x) := \{h\in\R^d : \text{$h$ is inferred from $S$ at $x$}\}.$$

\end{definition}
The notion of inference has a natural geometric perspective. Consider the partition of $\R^d$ induced by $S$.
Then, $S$ infers $h$ at $x$ if the cell in this partition that contains $x$
is either disjoint from $h$ or otherwise is contained in $h$ (so in either case, the
value of $\sign(\ip{h,\cdot})$ is constant on the cell).

Our algorithms and analysis are based on inferences from comparisons.
Let $S-S$ denote the set $\{h'-h'': h',h'' \in S\}$.
\begin{definition}[Inference by comparisons]
{We say that \emph{comparisons on $S$} infer $h$ at $x$ if $S \cup (S-S)$
infers $h$ at $x$.} Define
$$\infer(S;x) := \inference\bigl(S\cup(S-S);x\bigr).$$
\end{definition}
Thus, $\infer(S;x)$ is determined by querying $\sign(\ip{h',x})$ and $\sign(\ip{h'-h'',x})$ for all $h',h'' \in S$.
Naively, this requires some $O(|S|^2)$ linear queries.
However, using efficient sorting algorithm (e.g.\ merge-sort)
achieves it with just  $O(|S|\log|S|)$ comparison queries.
A further improvement, when $|S| > d$, is obtained by Fredman's sorting algorithm
that uses just  $O(|S| + d\log |S|)$ comparison queries~\cite{fredman1976good}.

\subsection{Vectors in isotropic position}

Vectors $h_1,\ldots,h_m \in \R^d$ are said to be in \emph{general position} if any $d$ of them are linearly independent.
They are said to be in \emph{isotropic position} if for any unit vectors $v \in S^d$,
\[
\frac{1}{m} \sum_{i=1}^m \ip{h_i,v}^{2} = \frac{1}{d}.
\]
Equivalently, if $\tfrac{1}{m}\sum h_i h_i^T$ is $\tfrac{1}{d}$ times the $d \times d$ identity matrix.
An important theorem of Forster~\cite{forster02linear} (see also Barthe~\cite{barthe1998reverse} for a more general statement) states
that any set of vectors in general position can be scaled to be in isotropic position.

\begin{theorem}[\cite{forster02linear}]\label{thm:forster}
Let $H \subset \R^d$ be a finite set in general position.
Then there exists an invertible linear transformation $T$ such that the set
\[H':=\left\{\frac{Th}{\|Th\|_2} : h\in H\right\}\]
is in isotropic position. We refer to such a $T$ as a \emph{Forster transformation} for $H$.

\end{theorem}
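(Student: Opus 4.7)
The plan is to follow Forster's original variational argument: I define a scale-invariant functional on $GL_d(\R)$, observe that its first-order optimality condition is exactly the isotropic condition, and then invoke a compactness argument to produce a minimizer. Concretely, introduce
$$F(T) := \sum_{h \in H} \log \|Th\|_2 - \frac{|H|}{d} \log |\det T|.$$
A one-line check gives $F(cT) = F(T)$ for every $c > 0$, so $F$ descends to a function on $GL_d(\R)/\R_{>0}$; equivalently, one may restrict attention to $T \in SL_d(\R)$.

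Next, suppose a minimizer $T^*$ exists and set $u_h := T^* h / \|T^* h\|_2$ for each $h \in H$. Perturbing by $T_\epsilon := (I + \epsilon A) T^*$ for an arbitrary $d \times d$ matrix $A$, a first-order expansion yields $\log\|T_\epsilon h\|_2 - \log\|T^* h\|_2 = \epsilon\, u_h^\top A u_h + O(\epsilon^2)$ and $\log|\det T_\epsilon| - \log|\det T^*| = \epsilon\, \mathrm{tr}(A) + O(\epsilon^2)$, so
$$\frac{d}{d\epsilon} F(T_\epsilon)\Big|_{\epsilon=0} = \mathrm{tr}\!\left(A\left(\sum_{h \in H} u_h u_h^\top - \frac{|H|}{d} I\right)\right).$$
Since this must vanish for every $A$ at a minimum, one gets $\frac{1}{|H|}\sum_{h \in H} u_h u_h^\top = \frac{1}{d} I$, which (tested against unit vectors $v \in S^{d-1}$) is exactly the isotropic-position condition for $H' := \{u_h : h \in H\}$. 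Thus $T = T^*$ is a Forster transformation.

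The hard part is the remaining existence step: showing $F$ attains its infimum on $SL_d(\R)$. This is a coercivity argument in which the general-position hypothesis is essential. Writing $T = U\,\mathrm{diag}(\sigma_1,\ldots,\sigma_d)\,V^\top$ with $\prod_i \sigma_i = 1$, if a sequence $T_n$ escapes every compact subset of $SL_d(\R)$ then (after passing to a subsequence) some singular values tend to $0$, others to $\infty$, and the associated right-singular subspaces converge in the Grassmannian. Using the bound $\|T_n h\|_2 \geq \sigma_{k+1}(T_n)\cdot \mathrm{dist}(h, V_k(T_n))$, where $V_k(T_n)$ is the span of the $k$ smallest right singular directions, combined with the general-position fact that any proper subspace contains at most $d-1$ elements of $H$, one shows that only boundedly many terms $\log\|T_n h\|_2$ can diverge to $-\infty$, while the remaining terms grow fast enough to overwhelm the (vanishing) determinant contribution. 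Hence $F(T_n) \to +\infty$, sublevel sets of $F$ are compact, and a minimizer exists. The combinatorial bookkeeping that controls how many $h$ can simultaneously align with the shrinking directions is the only nontrivial piece; the other two steps are routine calculations once the functional is in hand.
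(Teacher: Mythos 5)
The paper does not prove this theorem; it is stated as a citation to Forster \cite{forster02linear} (with a pointer to Barthe \cite{barthe1998reverse} for a generalization), and no proof is given in the text. So there is nothing in the paper to compare your argument against step by step.

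That said, your sketch is a faithful reconstruction of the standard variational proof, which is essentially Forster's original route. The three ingredients are all correct: scale invariance of $F(T)=\sum_{h}\log\|Th\|_2-\tfrac{|H|}{d}\log|\det T|$, the first-variation computation $\tfrac{d}{d\epsilon}F((I+\epsilon A)T^*)\big|_{\epsilon=0}=\mathrm{tr}\bigl(A(\sum_h u_hu_h^\top-\tfrac{|H|}{d}I)\bigr)$ whose vanishing for all $A$ is exactly the isotropy condition, and coercivity on $SL_d(\R)$ via the singular-value/Grassmannian argument together with the general-position bound $|H\cap V|\le\dim V$ for every proper subspace $V$. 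Two small points worth flagging if you were to write this out in full. First, the coercivity bookkeeping genuinely needs the strict inequality $|H\cap V|<\tfrac{|H|}{d}\dim V$ for every proper subspace $V$; general position gives $|H\cap V|\le\dim V$, which yields the required strictness only when $|H|>d$, so the case $|H|=d$ should be handled separately (it is trivial: map the $d$ linearly independent vectors to an orthonormal basis), and the theorem is vacuous for $|H|<d$. Second, after concluding $F(T_n)\to+\infty$ along any sequence leaving compacta of $SL_d(\R)$, you should say a word about why a minimizer over this noncompact manifold exists (sublevel sets are closed and bounded, hence compact); you do gesture at this, but it is the step where the work lives. Modulo writing out that bookkeeping carefully, your argument is sound and matches the cited source's approach.
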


We will also need a relaxed notion of isotropic position. Given vectors $h_1,\ldots,h_m \in \R^d$ and some $0<c<1$, we say that
the vectors are in \emph{$c$-approximate isotropic position}, if for all unit vectors $v \in S^d$ it holds that
\[
\frac{1}{m} \sum_{i=1}^m \ip{h_i,v}^2 \ge \frac{c}{d}.
\]
We note that this condition is easy to test algorithmically, as it is equivalent to the statement that the smallest eigenvalue of the positive semi-definite $d \times d$ matrix $\frac{1}{m} \sum_{i=1}^m h_i h_i^T$ is at least~$\frac{c}{d}$.

We summarize it in the following lemma, which follows from basic real linear algebra.
\begin{claim}\label{c:isotropiceigen}
Let $h_1,\ldots,h_m\in\R^d$ be unit vectors. Then the following are equivalent.
\begin{itemize}
\item $h_1,\ldots,h_m$ are in $c$-approximate isotropic position.
\item  $\lambda_1\left(\frac{1}{m}\sum_{i=1}^m h_i h_i^T\right)\geq c/d$,
\end{itemize}
where $\lambda_1(M)$ denotes the minimal eigenvalue of a positive
semidefinite matrix $M$.
\end{claim}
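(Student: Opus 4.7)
The plan is to reduce both conditions to a single statement about the Rayleigh quotient of the matrix $M := \frac{1}{m}\sum_{i=1}^m h_i h_i^T$, and then invoke the variational characterization of eigenvalues of real symmetric matrices.

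First, I would rewrite the key quantity appearing in the definition of $c$-approximate isotropic position. For any vector $v\in\R^d$,
\[
\ip{h_i,v}^2 = (v^T h_i)(h_i^T v) = v^T (h_i h_i^T) v,
\]
and summing over $i$ and dividing by $m$ yields
\[
\frac{1}{m}\sum_{i=1}^m \ip{h_i,v}^2 \;=\; v^T M v.
\]
Consequently, the condition that $h_1,\ldots,h_m$ are in $c$-approximate isotropic position is literally the statement that $v^T M v \ge c/d$ for every unit vector $v\in S^{d-1}$.

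Next, I would note that $M$ is symmetric and positive semidefinite, being a nonnegative sum of rank-one PSD matrices $h_i h_i^T$. By the spectral theorem, $M$ is orthogonally diagonalizable with real nonnegative eigenvalues, and the standard variational (Courant--Fischer / Rayleigh quotient) characterization gives
\[
\lambda_1(M) \;=\; \min_{\|v\|=1} v^T M v.
\]
Therefore the condition ``$v^T M v \ge c/d$ for all unit $v$'' is equivalent to ``$\lambda_1(M) \ge c/d$,'' establishing the desired equivalence.

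There is no genuine obstacle here: the whole claim is essentially a dictionary translation between the definition of approximate isotropy and the Rayleigh quotient formula. The only thing worth being careful about is the convention that $\lambda_1(M)$ denotes the \emph{minimum} eigenvalue (as specified in the statement), so that the Courant--Fischer characterization is applied with a minimum rather than a maximum.
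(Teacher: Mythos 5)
Your proof is correct and is precisely the ``basic real linear algebra'' the paper alludes to without writing out: rewrite the averaged squared inner products as the quadratic form $v^T M v$ and apply the Rayleigh-quotient (Courant--Fischer) characterization of the smallest eigenvalue. The paper does not supply an explicit argument for this claim, and yours is the standard one it implicitly has in mind.
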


We will need the following basic claims. The first claim shows that a set of unit vectors in an approximate isotropic position
has many vectors with non-negligible inner product with any unit vector.

\begin{claim}\label{c:manymargin}
Let $h_1,\ldots,h_m\in\R^d$ be unit vectors in a $c$-approximate isotropic position,
and let $x\in\R^d$ be a unit vector.
Then, at least a $\frac{c}{2d}$-fraction of the $h_i$'s satisfy $\lvert \ip{h_i,x}\rvert > \sqrt{\frac{c}{2d}}$.
\end{claim}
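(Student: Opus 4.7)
My plan is a simple averaging (first-moment) argument. By the $c$-approximate isotropic assumption applied to the unit vector $x$, we have
\[
\frac{1}{m}\sum_{i=1}^m \ip{h_i,x}^2 \;\geq\; \frac{c}{d}.
\]
I will upper-bound the left-hand side by splitting the sum into vectors with small inner product and vectors with large inner product, and then force the fraction $p$ of large ones to be at least $c/(2d)$.

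More concretely, let $p$ denote the fraction of indices $i$ with $\ip{h_i,x}^2 > c/(2d)$ (equivalently, $\lvert\ip{h_i,x}\rvert > \sqrt{c/(2d)}$). For the remaining $(1-p)$-fraction, each term contributes at most $c/(2d)$ to the average. For the $p$-fraction with large inner product, I use the fact that $h_i$ and $x$ are unit vectors, so by Cauchy--Schwarz $\ip{h_i,x}^2 \leq 1$; each such term contributes at most $1$ to the average. Combining these two bounds gives
\[
\frac{c}{d} \;\leq\; \frac{1}{m}\sum_{i=1}^m \ip{h_i,x}^2 \;\leq\; (1-p)\cdot\frac{c}{2d} + p\cdot 1.
\]

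Rearranging yields $p\bigl(1 - c/(2d)\bigr) \geq c/(2d)$, and since $1 - c/(2d) \leq 1$ we conclude $p \geq c/(2d)$, as desired. There is no real obstacle here — the only thing to watch out for is the strict versus non-strict inequality in the claim's statement, which is handled by defining the ``large'' group using the strict threshold $\ip{h_i,x}^2 > c/(2d)$ (so vectors exactly on the boundary fall into the ``small'' group and the bound $(1-p)\cdot c/(2d)$ remains valid).
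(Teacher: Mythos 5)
Your proof is correct and takes essentially the same averaging argument as the paper; the only cosmetic difference is that the paper phrases it as a proof by contradiction while you derive the bound on $p$ directly.
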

\begin{proof}
Assume otherwise. It follows that
\[
\frac{1}{m}\sum_{i=1}^{m}\lvert \ip{h,x_i}\rvert^2
\leq
\frac{c}{2d}\cdot 1 + \left(1-\frac{c}{2d}\right)\frac{c}{2d}
<
\frac{c}{2d} + \frac{c}{2d}
=
\frac{c}{d}.
\]
This contradicts the assumption that the $h_i$'s are in $c$-approximate isotropic position.
\end{proof}

The second claim shows that a random subset of a set of unit vectors in an approximate isotropic position
is also in approximate isotropic position, with good probability.

\begin{claim}\label{c:chern}
Let $h_1,\ldots, h_m$ be unit vectors in $c$-approximate isotropic position.
Let $i_1,\ldots,i_k \in [m]$ be independently and uniformly sampled.
Then for any $\delta>0$, the vectors $h_{i_1},\ldots, h_{i_k}$ are in $\left((1-\delta) c\right)$-approximate isotropic position with probability at least
\[
1-d\cdot \left[ \frac{e^{-\delta}}{(1-\delta)^{1-\delta}}\right]^{ck/d}.
\]
\end{claim}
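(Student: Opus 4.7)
The plan is to recognize Claim~\ref{c:chern} as a direct application of the matrix Chernoff lower-tail inequality (in the style of Ahlswede--Winter / Tropp), combined with Claim~\ref{c:isotropiceigen} to translate between the approximate isotropic condition and the minimum eigenvalue of the second moment matrix.

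First I would set up the right random matrices. Define, for $j=1,\ldots,k$, the i.i.d.\ random PSD matrices
\[
X_j := h_{i_j} h_{i_j}^T.
\]
Since each $h_i$ is a unit vector, each $X_j$ is rank one with operator norm exactly $1$. Their common expectation is
\[
\E[X_j] \;=\; \frac{1}{m}\sum_{i=1}^m h_i h_i^T,
\]
which by Claim~\ref{c:isotropiceigen} (applied to the hypothesis that $h_1,\ldots,h_m$ are in $c$-approximate isotropic position) has minimum eigenvalue at least $c/d$. Hence, writing $S := \sum_{j=1}^k X_j$, we get $\lambda_1(\E[S]) \ge ck/d$.

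Next I would apply the standard matrix Chernoff bound for the smallest eigenvalue: for a sum $S$ of independent PSD random matrices with $\|X_j\|_{\mathrm{op}}\le R$, and $\mu_{\min}:=\lambda_1(\E[S])$, one has
\[
\Pr\!\left[\lambda_1(S) \le (1-\delta)\,\mu_{\min}\right] \;\le\; d\cdot\!\left[\frac{e^{-\delta}}{(1-\delta)^{1-\delta}}\right]^{\mu_{\min}/R}
\]
for every $\delta\in(0,1)$. In our setting $R=1$ and $\mu_{\min}\ge ck/d$, and the function $t\mapsto t^{\mu_{\min}/R}$ is monotone for $t\in(0,1)$ (the bracketed quantity is in $(0,1)$ for $\delta\in(0,1)$), so replacing $\mu_{\min}$ by its lower bound $ck/d$ only weakens the bound, yielding precisely
\[
\Pr\!\left[\lambda_1(S) \le (1-\delta)\tfrac{ck}{d}\right] \;\le\; d\cdot\!\left[\frac{e^{-\delta}}{(1-\delta)^{1-\delta}}\right]^{ck/d}.
\]

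Finally I would translate back to the isotropic language. Since $\lambda_1\!\left(\tfrac{1}{k}\sum_{j}h_{i_j}h_{i_j}^T\right) = \lambda_1(S)/k$, the complementary event gives $\lambda_1\!\left(\tfrac{1}{k}\sum_j h_{i_j}h_{i_j}^T\right) \ge (1-\delta)c/d$, which by Claim~\ref{c:isotropiceigen} is exactly the statement that $h_{i_1},\ldots,h_{i_k}$ are in $\bigl((1-\delta)c\bigr)$-approximate isotropic position. The only genuine content is the invocation of the matrix Chernoff bound; the main (mild) obstacle is simply picking the correct citation/statement of that inequality and verifying that the monotonicity step above is legitimate on the relevant range of $\delta$. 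Everything else is bookkeeping.
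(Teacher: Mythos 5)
Your proof is correct and takes essentially the same route as the paper's: translate the approximate isotropic condition to a minimum-eigenvalue statement via Claim~\ref{c:isotropiceigen}, then invoke the matrix Chernoff lower-tail bound from Tropp. The paper's proof is terser (it simply declares the claim an ``immediate corollary''), while you spell out the bookkeeping, including the monotonicity step needed to replace $\mu_{\min}$ by its lower bound $ck/d$ in the exponent; that step is legitimate and implicitly used in the paper as well.
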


\begin{proof}
This is an immediate corollary of Matrix Chernoff bounds~\cite{tropp12friendly}.
By \Cref{c:isotropiceigen} the above event is equivalent to
that $\lambda_1\left(\frac{1}{k}\sum_{i=1}^{k}h_ih_i^T\right) \geq (1-\delta)\frac{c}{d}$.
By assumption, $\lambda_1\left(\frac{1}{m}\sum_{i=1}^mh_ih_i^T \right)\geq \frac{c}{d}$.
Now, by the Matrix Chernoff bound, for any $\delta \in [0,1]$ it holds that
\[
\Pr\left[ \lambda_1\left(\frac{1}{k}\sum_{i=1}^{k}h_ih_i^T\right) \leq (1-\delta)\cdot\frac{c}{d} \right]\leq
d\cdot \left[ \frac{e^{-\delta}}{(1-\delta)^{1-\delta}}\right]^{ck/d}.
\]
\end{proof}
We will use two instantiations of \Cref{c:chern}:
(i) $c\geq 3/4$, and $(1-\delta)c=1/2$,
and (ii) $c=1$ and $(1-\delta)c = 3/4$.
In both cases the bound simplifies to
\begin{equation}\label{eq:chern}
1-d\cdot \left(\frac{99}{100}\right)^{k/d}.
\end{equation}


\section{Proof of main theorem}

Let $H \subset \R^d$. We prove Theorem~\ref{thm:main} in four steps:
\begin{enumerate}
\item First, we assume that $H$ is in general position. In this case, we construct a randomized generalized comparison LDT
which computes $\A_H(\cdot)$, whose expected depth is $O(d^3 \log d \log |H|)$ for any input.
This is achieved in \Cref{sec:proof1}, see \Cref{lem:genposition}.

\item Next, we derandomize the construction. This gives
for any $H$ in general position a (deterministic) generalized comparison LDT which computes $\A_H(\cdot)$, whose depth is $O(d^4 \log d \log |H|)$.
This is achieved in \Cref{sec:proof2}, see \Cref{lem:genposdet}.

\item In the next step, we handle an arbitrary $H$ (not necessarily in general position), and construct by a compactness argument
a generalized comparisons LDT of depth $O(d^4 \log d \log |H|)$ which computes it almost everywhere.
This is achieved in \Cref{sec:proof3}, see \Cref{lem:almosteverywhere}.

\item Finally, we show that any LDT which computes $\A_H(\cdot)$ almost everywhere can be ``fixed" to a LDT
which computes $\A_H(\cdot)$ everywhere. This fixing procedure maintains both the depth of the LDT, as well as the set of queries
performed by it. This is achieved in \Cref{sec:proof4}, see \Cref{lem:LDT_fix}.
\end{enumerate}

\subsection{A randomized LDT for $H$ in general position}
\label{sec:proof1}

In this section we construct a randomized generalized comparison LDT for $H$ in general position. Here, by a randomized LDT we mean a distribution over (deterministic) LDT which
compute $\A_H(\cdot)$. The corresponding complexity measure is the expected number of queries it makes, for the worst-case input $x$.

\begin{lemma}\label{lem:genposition}
Let $H\subseteq\R^d$ be a finite set in general position.
Then there exists a randomized LDT that computes $\A_H(\cdot)$, which makes $O\left(d^3 \log d \log \lvert H\rvert\right)$ generalized comparison queries on expectation, for any input.
\end{lemma}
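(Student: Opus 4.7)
The plan is to construct the tree iteratively. Starting from $H_0 = H$, at each step $t$ we produce $H_{t+1} \subsetneq H_t$ by determining, through labels and inferences, the value of $\sign(\ip{h,x})$ for a $\Omega(1/d)$-fraction of the elements of $H_t$, at the cost of $O(d^{2}\log d)$ generalized comparison queries on average. Since this yields $\E[|H_{t+1}|] \le (1-\Omega(1/d))\,|H_t|$, after $O(d \log|H|)$ rounds in expectation every label is known, giving the claimed $O(d^3 \log d \log|H|)$ query bound.

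To implement one round, first apply \Cref{thm:forster} to the current set $H_t$ (which is in general position, as a subset of $H$) to obtain an invertible $T$ such that $H'_t := \{Th/\|Th\|_2 : h \in H_t\}$ is in isotropic position. Writing $y := (T^{-1})^\top x$, one checks that $\ip{Th/\|Th\|_2,\, y} = \ip{h,x}/\|Th\|_2$, so $\sign(\ip{h,x}) = \sign(\ip{Th/\|Th\|_2,y})$ and the label problem on $H_t$ at $x$ coincides with the label problem on $H'_t$ at $y$. Crucially, each \emph{comparison} query on $H'_t$ at $y$ equals a \emph{generalized} comparison query on $H_t$ at $x$:
\[
\sign\!\left(\ip{\tfrac{Th_1}{\|Th_1\|_2} - \tfrac{Th_2}{\|Th_2\|_2},\, y}\right) \;=\; \sign\!\left(\ip{\tfrac{1}{\|Th_1\|_2}h_1 - \tfrac{1}{\|Th_2\|_2}h_2,\, x}\right).
\]
Now draw a uniform random subset $S \subseteq H_t$ of size $k = \Theta(d^2 \log d)$, query the labels $\sign(\ip{s,x})$ for all $s\in S$, and sort the values $\ip{Ts/\|Ts\|_2,\,y}$ for $s \in S$ via Fredman's algorithm~\cite{fredman1976good} using $O(k + d\log k) = O(d^2 \log d)$ generalized comparison queries. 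This determines $\sign(\ip{s',y})$ and $\sign(\ip{s'-s'',y})$ for all $s',s'' \in S' := \{Ts/\|Ts\|_2 : s\in S\}$. Without any further queries, label every $h \in H_t\setminus S$ whose normalization lies in $\infer(S';y)$, and let $H_{t+1}$ be the set of elements that remain unlabeled.

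The heart of the proof is to show $\E\bigl[|H_t|-|H_{t+1}|\bigr] \ge \Omega(|H_t|/d)$. Because $H'_t$ is in isotropic position, \Cref{c:manymargin} (with $c=1$) guarantees that at least a $\frac{1}{2d}$-fraction of $H'_t$ are \emph{well-margined}, i.e.\ satisfy $|\ip{h',y}| > 1/\sqrt{2d}$. \Cref{c:chern}, together with \eqref{eq:chern}, implies that the random sample $S'$ is itself in $3/4$-approximate isotropic position except with probability $d\cdot(99/100)^{k/d} = o(1)$. We may then invoke the margin-based inference dimension analysis of~\cite{kane2017active} with margin $\gamma = 1/\sqrt{2d}$ and sample size $\Theta(d/\gamma^2 \cdot \log d) = \Theta(d^2 \log d) = k$: for any fixed well-margined $h' \in H'_t\setminus S'$, the probability that $h' \notin \infer(S';y)$ is bounded by a small constant. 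By linearity of expectation, a constant fraction of the well-margined elements — hence a $\Omega(1/d)$-fraction of $H_t$ — are inferred in expectation, giving the desired shrinkage.

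The main obstacle is the last step: reusing the margin-based inference argument from~\cite{kane2017active}, which was written under the assumption of a \emph{uniform} margin across the entire input family, in the present situation where only a $\Omega(1/d)$-fraction of $H'_t$ is well-margined. The fix is to restrict attention to the sub-family of well-margined points: the Matrix Chernoff bound in \Cref{c:chern} ensures that enough of them ($\Omega(k/d) = \Omega(d\log d)$) land inside the random sample $S'$ and that these form an approximately isotropic configuration, which is precisely what the inference analysis of~\cite{kane2017active} needs in order to conclude that a fixed well-margined $h' \notin S'$ is inferred with constant probability.
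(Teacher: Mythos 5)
Your high-level structure matches the paper exactly: iterate, apply Forster's theorem to the current $H_t$, observe that comparison queries on the normalized set $H'_t$ translate to generalized comparison queries on $H_t$, sample $S$ of size $\Theta(d^2\log d)$, label and sort, infer, and recurse for $O(d\log|H|)$ rounds. The translation identity you write out and the query accounting via Fredman's algorithm are both correct.

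The gap is in the last two paragraphs, which are precisely the technical heart. You correctly flag the obstacle — the margin-based analysis of~\cite{kane2017active} assumes a \emph{uniform} margin, while here only an $\Omega(1/d)$-fraction of $H'_t$ is well-margined — but the proposed fix does not hold up as stated. First, you claim that the Matrix Chernoff bound ensures the well-margined points inside $S'$ ``form an approximately isotropic configuration.'' \Cref{c:chern} says no such thing: it guarantees that \emph{all of} $S'$ is in approximate isotropic position, not that the (unknown, $x$-dependent) well-margined sub-family of $S'$ is. In fact no isotropy of the well-margined sub-family is available or needed. Second, you cannot ``restrict attention to the well-margined points'' directly, because comparison queries only reveal the sort order of $\ip{h,x}$, not the actual inner products; you do not know which elements of $S'$ clear the margin threshold. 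What the paper actually does (in \Cref{prop:inference0} and \Cref{c:inference1,c:inference2,c:inference3}) is identify the top $t = |S|/8d$ elements $T$ \emph{of the sorted sample}; these are guaranteed to be well-margined by \Cref{c:manymargin} precisely because $S$ (and $S\setminus\{h_a\}$) is approximately isotropic, and then a packing argument (\Cref{c:inference3}) — which requires no isotropy at all, only that $T$ consists of unit vectors — shows half of $T$ is inferable. This is a new, self-contained argument rather than a black-box invocation of the earlier analysis, and it cannot be recovered by your proposed restriction. A smaller omission: the paper sorts $\pm S_i$ (both $\ip{h,x}$ and $\ip{-h,x}$), which is what licenses the ``assume $\ip{h,x}\ge 0$ for all $h\in H$'' reduction; sorting only $S'$ as you propose would leave well-margined elements with negative inner product stranded at the bottom of the sort, outside of $T$.
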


The proof of \Cref{lem:genposition} is based on a variant of the margin-based analysis of the inference dimension
with respect to comparison queries as in~\cite{kane2017active}
(The analysis in \cite{kane2017active} assumed that all vectors have large margin,
where here we need to work under the weaker assumption that only a noticeable fraction of the vectors have large margin).
The crux of the proof relies on scaling every $h\in H$
by a carefully chosen scalar $\alpha_h$ such that
drawing a sufficiently large random subset of $H$,
and sorting the values $\ip{\alpha_h h,x}$ using comparison queries (which correspond to generalized comparisons on the $h$'s)
allows to infer, on average, at least $\Omega(1/d)$ of the labels of $H$.
The scalars $\alpha_h$ are derived via Forster's theorem (\Cref{thm:forster}).
More specifically, $\alpha_h = \frac{1}{\|Th\|_2}$, where $T$ is a Forster transformation for $H$.

%
%

\begin{tcolorbox}
\begin{center}
{\bf Randomized generalized-comparisons tree for $H$ in general position}\\
\end{center}
\noindent
Let $H \subseteq \R^d$ in general position.\\
\\
Input: $x\in\R^d$, given by oracle access for $\sign(\ip{\cdot,x})$\\
Output: $\A_H(x) = \left(\sign(\ip{h,x})\right)_{h\in H}$
\begin{enumerate}
\item[(1)] Initialize: $H_0=H$, $i=0$, $v(h)=?$ for all $h \in H$. Set $k=\Theta(d^2\log(d))$.
\item[(2)] Repeat while $|H_i| \ge k$:
\begin{enumerate}
\item[(2.1)] Let $T_i$ be the Forster transformation for $H_i$. Define $H'_i = \left\{\frac{h}{\|T_i h\|_2}: h \in H_i\right\}$.
\item[(2.2)] Sample uniformly $S_i \subset H'_i$ of size $|S_i|=k$.
\item[(2.3)] Query $\sign(\ip{h,x})$ for $h\in S_i$ (using label queries).
\item[(2.4)] Sort $\ip{h,x}$ and $\ip{-h,x}$ for $h\in S_i$ (using generalized comparison queries).
\item[(2.5)] For all $h\in  H_i$, check if $h\in\infer(\pm S_i;x)$,
and in case it is, set $v(h) \in \{-,0,+\}$ to be the inferred value of $h$.
\item[(2.6)] Remove all $h\in H_i$ for which $\sign\left(\ip{h,x}\right)$ was inferred,
set $H_{i+1}$ to be the resulting set and go to step (2).
\end{enumerate}
\item[(3)] Query $\sign(\ip{h,x})$ for all $h \in H_i$, and set $v(h)$ accordingly.
\item[(4)] Return $v$ as the value of $\A_{H}(x)$.
\end{enumerate}
\end{tcolorbox}

In order to understand the intuition behind the main iteration (2) of the algorithm, define $x'=(T_i^{-1})^{\mathrm{T}} x$ and for each $h \in H_i$ let $h'=\frac{T_i h}{\|T_i h\|}$.
Then $\sign(\ip{h,x})=\sign(\ip{h',x'})$, and so it suffices to infer the sign for many $h' \in H_i$ with respect to $x'$. The main benefit is that
we may assume in the analysis that the set of vectors $H'_i$ is in isotropic position; and reduce the analysis to that of using (standard) comparisons
on $H'_i$ and $x'$. These then translate to performing generalized comparison queries on $H_i$ and the original input $x$. The following lemma
captures the analysis of the main iteration of the algorithm. Below, we denote by~$\pm S := S \cup (-S)$.

\begin{lemma}\label{lem:mainstep}
Let $x\in \R^d$, let $H \subseteq\R^d$ be a finite set of unit vectors in $c$-approximate isotropic position with $c\geq 3/4$,
and let $S \subset H$ be a uniformly chosen subset of size $k =  \Omega\left(d^2\log d \right)$. Then
\[\E_S\left[\lvert \infer(\pm S;x)\cap H \rvert\right]\geq \frac{\lvert H\rvert}{40d}.\]
\end{lemma}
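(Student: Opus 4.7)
The plan is to reduce to the large-margin subfamily of $H$ and then invoke a margin-based inference-dimension bound from~\cite{kane2017active}. Concretely, set $\gamma := \sqrt{c/(2d)} = \Omega(1/\sqrt d)$, and let $L := \{h\in H : |\ip{h,x}| > \gamma\}$. Claim~\ref{c:manymargin} yields $|L| \ge \frac{c}{2d}|H| \ge \frac{3|H|}{8d}$, and every $h\in L$ is a unit vector of margin at least $\gamma$ with respect to $x$. The margin-based analysis of~\cite{kane2017active} then implies that the inference dimension of $L$ under comparison queries at $x$ is bounded by some $n_0 = O(d\log(1/\gamma)) = O(d\log d)$: in any $(n_0+1)$-subset of $L$ at least one element is inferred from the others together with all their pairwise comparisons.

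Next I would convert this inference-dimension bound into an average-case guarantee via the standard leave-one-out argument. Let $S' \subseteq L$ be uniformly random of size $s$ and let $h^\star$ be uniformly random in $L \setminus S'$; then $T := S' \cup \{h^\star\}$ is uniformly distributed over the $(s+1)$-subsets of $L$, and by the inference-dimension property at most $n_0$ elements of $T$ fail to be inferred from the remaining ones in $T$. Consequently a uniformly random element of $T$ is inferred with probability at least $1 - n_0/(s+1)$, which, after unwinding the symmetry, becomes
\[
\E_{S'}\bigl[\,|L \cap \infer(\pm S';x)|\,\bigr]\;\ge\;|L|\!\left(1-\frac{n_0}{s+1}\right).
\]

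To conclude I would apply this to $S \cap L$: although $S$ is sampled from $H$ rather than from $L$, conditional on $|S\cap L|=s$ the intersection is uniform over the $s$-subsets of $L$, and monotonicity of $\infer$ gives $\infer(\pm S;x) \supseteq \infer(\pm(S\cap L);x)$. The size $|S\cap L|$ is hypergeometric with mean $k|L|/|H| \ge 3k/(8d) = \Theta(d\log d)$, so by choosing the hidden constant in $k=\Omega(d^2\log d)$ large enough one can guarantee $\E[|S\cap L|] \ge 10\,n_0$; a Chernoff bound then shows $|S\cap L| \ge 2n_0$ with probability $1-o(1)$. On that event $1-n_0/(|S\cap L|+1) \ge 1/2$, which gives
\[
\E_S\bigl[\,|H\cap\infer(\pm S;x)|\,\bigr] \;\ge\; \E_S\bigl[\,|L\cap\infer(\pm S;x)|\,\bigr] \;\ge\; \frac{|L|}{2}\bigl(1-o(1)\bigr) \;\ge\; \frac{|H|}{40d}.
\]

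The main obstacle is the margin-based inference-dimension bound of the first paragraph: it is the technical heart of~\cite{kane2017active}, and one must verify that its proof goes through when restricted to the large-margin sub-collection $L$ alone (the presence of additional small-margin vectors in $S \setminus L$ can only make inference easier, so incorporating them causes no difficulty).
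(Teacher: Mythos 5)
The high-level skeleton of your argument (reduce to a margin subfamily, invoke an inference-dimension bound, finish with a leave-one-out averaging argument) looks superficially similar to the paper's, but the technical core is different, and I believe the step you flag as needing verification is precisely where the proof breaks.

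The paper does \emph{not} bound the inference dimension of the large-margin subfamily $L$ in isolation. Instead, in \Cref{prop:inference0} and \Cref{c:inference1} it argues that a fraction of the top-sorted elements of $S$ are inferred from \emph{all of} $S\setminus\{h_a\}$, and the proof of \Cref{c:inference1} leans crucially on the $(1/2)$-approximate isotropic position of the sampled set $S$ itself. The reason this is needed: the definition of $\infer(S';x)$ quantifies over every point $y$ consistent with the observed comparison and label signs, and to conclude $\ip{h_a,y}>0$ from the approximation $h_a\approx h_1+v$ one must know that $\ip{h_1,y}$ is bounded away from zero \emph{for any such $y$} — not merely that $\ip{h_1,x}\ge\gamma$. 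The paper obtains this lower bound on $\ip{h_1,y}$ by applying \Cref{c:manymargin} to $S'$ (which is in approximate isotropic position, a property of the fixed vector set that holds uniformly over $y$), after noting that the comparisons reveal that $h_1$ is at the top of the sort order. Your proposal throws away exactly this information: $L$ need not be anywhere near isotropic — in fact it tends to be concentrated near $\pm x$ — and the margin assumption on $L$ refers only to $x$, giving you $\ip{h_1,y}>0$ from the label query but not $\ip{h_1,y}\ge\Omega(\gamma)$. With only $\ip{h_1,y}>0$, $\ip{v,y}\ge0$, and $\|h_a-(h_1+v)\|\le\eps$, the best you get is $\ip{h_a,y}>-\eps$, which is not a sign determination.

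Consequently, your parenthetical remark that ``the presence of additional small-margin vectors in $S\setminus L$ can only make inference easier, so incorporating them causes no difficulty'' has the situation backwards: those small-margin vectors are not an optional add-on, they are load-bearing. It is precisely the full sample $S$, including its low-margin members, that is in approximate isotropic position (via \Cref{prop:isotropicsample} / \Cref{c:chern}), and this is what lets the proof deduce large margins for any consistent $y$ from comparisons alone. Restricting the inference to $\infer\bigl(\pm(S\cap L);x\bigr)$ discards the set whose isotropic structure makes the whole argument go. To repair the proposal you would need either (a) a genuine theorem stating that a set of unit vectors having margin $\ge\gamma$ at a single point $x$ has inference dimension $O(d\log(1/\gamma))$ at $x$ — which I do not believe holds in the generality you need, for the reason above — or (b) to re-introduce the isotropic position of the full sample $S$ into the analysis, which is essentially what the paper's \Cref{prop:isotropicsample} and \Cref{prop:inference0} do.
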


Note that this proves a stronger statement than needed for \Cref{lem:genposition}.
Indeed, it would suffice to consider only $H$ that is in (a complete) isotropic position.
This stronger version will be used in the next section for derandomizing the above algorithm.
Let us first argue how \Cref{lem:genposition} follows from \Cref{lem:mainstep}, and then proceed
to prove \Cref{lem:mainstep}.

\begin{proof}[Proof of \Cref{lem:genposition} given \Cref{lem:mainstep}]
By \Cref{lem:mainstep}, in each iteration (2) of the algorithm, we infer on expectation at least $\Omega(1/d)$ fraction of the $h \in H'_i$
with respect to $x'=T_i^{-1} x$. By the discussion above, this is the same as inferring an $\Omega(1/d)$
fraction of the $h_i \in H_i$ with respect to $x$. So, the total expected number of iterations needed is $O(d \log |H|)$.
Next, we calculate the number of linear queries performed at each iteration.
The number of label queries is $O(k)$ and the number of comparison queries on $H'_i$ (which translate to generalized comparison
queries on $H_i$) is $O(k \log k)$ if we use merge-sort, and can be improved to $O(k + d \log k)$ by using Fredman's sorting algorithm~\cite{fredman1976good}.
So, in each iteration we perform $O(d^2 \log d)$ queries, and the expected number of iterations is $O(d \log |H|)$.
So the expected total number of queries by the algorithm is $O(d^3 \log d \log |H|)$.
\end{proof}


From now on, we focus on proving \Cref{lem:mainstep}. To this end, we assume from now that $H \subset \R^d$ is in $c$-isotropic position for $c \ge 3/4$.
Note that $h$ is inferred from comparisons on $\pm S$ if and only if $-h$ is, and that replacing an element of $S$ with its negation does not affect $\pm S$. Therefore, negating elements of $H$ does not change the expected number of elements inferred from comparisons on $\pm S$. Therefore, we may assume in the analysis that $\ip{h,x}\geq 0$ for all $h\in H$. Under this assumption, we will show that
\[\E_S\left[\lvert \infer(S;x)\cap H \rvert\right]\geq \frac{\lvert H\rvert}{40d}.\]
It is  convenient to analyze the following procedure for sampling $S$:
\begin{itemize}
\item Sample $h_1,\ldots h_{k+1}$ random points in $H$, and $r \in [k+1]$ uniformly at random.
\item Set $S =\{h_j : j \in [k+1] \setminus \{r\} \}$.
\end{itemize}
We will analyze the probability that comparisons on $S$ infer $h_r$ at $x$.
Our proof relies on the following observation.
\begin{obs}
The probability, according to the above process, that $h_r\in\infer(S;x)$
is equal to the expected fraction of $h\in H$ whose label is inferred. That is,
\[\Pr\left[h_r\in\infer(S;x)\right] = \E\left[\frac{\lvert\infer(S;x)\cap H\rvert}{\lvert H\rvert}\right].\]
\end{obs}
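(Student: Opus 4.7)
The plan is to exploit the exchangeability of the sampling procedure. The random variables $(h_1,\ldots,h_{k+1},r)$ are jointly exchangeable in the indices $1,\ldots,k+1$: the $h_j$'s are i.i.d.\ uniform on $H$ and $r$ is uniform in $[k+1]$ and independent of them. Consequently the joint law of the pair $(h_r, S)$ is identical to the joint law of $\bigl(h_{k+1},\{h_1,\ldots,h_k\}\bigr)$, and under this re-indexing $h_{k+1}$ is a uniformly random element of $H$ that is independent of $S=\{h_1,\ldots,h_k\}$.

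First I would make the exchangeability precise by averaging over the $k+1$ equally likely values of $r$; this reduces the probability in question to
\[\Pr\bigl[h_r\in\infer(S;x)\bigr] = \Pr\bigl[h_{k+1}\in\infer(\{h_1,\ldots,h_k\};x)\bigr].\]
Second I would condition on $S=\{h_1,\ldots,h_k\}$: because $h_{k+1}$ is independent of $S$ and uniform on $H$, the conditional probability of the event $h_{k+1}\in\infer(S;x)$ is exactly $\lvert\infer(S;x)\cap H\rvert/\lvert H\rvert$. Taking expectation over $S$ via the tower property then yields the claimed identity.

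There is no real obstacle here: the observation is essentially a ``leave-one-out'' symmetry, and once the left-out index is fixed the rest is immediate from independence and uniformity. The only mild subtlety is that $S$ is a set, so repetitions among the $h_j$'s collapse; but this does not affect the argument because $\mathbf{1}[h_r\in\infer(S;x)]$ depends on the tuple $(h_1,\ldots,h_{k+1},r)$ only through the pair $(S,h_r)$, and the exchangeability is applied at the tuple level before passing to the set.
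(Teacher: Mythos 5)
Your proof is correct, and it fills in what the paper treats as self-evident: the paper states this observation without any argument, relying on the reader to supply exactly the exchangeability/leave-one-out reasoning you give. Your two steps — first using exchangeability of $(h_1,\ldots,h_{k+1},r)$ to reduce $\Pr[h_r\in\infer(S;x)]$ to $\Pr\bigl[h_{k+1}\in\infer(\{h_1,\ldots,h_k\};x)\bigr]$, then conditioning on $S$ and using that $h_{k+1}$ is independent of $S$ and uniform on $H$ — are precisely the intended argument. Your closing remark about the indicator depending on the tuple only through $(S,h_r)$ is also the right observation to dispose of the only potential subtlety (collapsing of repeats when passing from the tuple to the set $S$). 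Note that the argument implicitly reads ``sample $h_1,\ldots,h_{k+1}$ random points in $H$'' as i.i.d.\ uniform draws (with replacement), which is consistent with the paper's usage elsewhere (e.g.\ the ``random (multi-)subset'' in the derandomization proof) and is needed for the stated equality; under sampling without replacement one would instead get an inequality in the useful direction.
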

Thus, it suffices to show that $\Pr\left[h_r\in\infer(S;x)\right]\geq 1/40d$.
This is achieved by the next two propositions as follows.
\Cref{prop:isotropicsample} shows that $S$ is in a
$(1/2)$-approximate isotropic position with probability at least $1/2$,
and \Cref{prop:inference0} shows that whenever $S$ is in (1/2)-approximate isotropic position
then $h_r\in \infer(S;x)$ with probability at least~$1/20d$.
Combining these two propositions together yields that
$\Pr\left[h_r\in\infer(S;x)\right]\geq 1/40d$
and finishes the proof of Lemma~\ref{lem:mainstep}.

\begin{proposition}\label{prop:isotropicsample}
Let $H \subset \R^d$ be a set of unit vectors in $c$-approximate isotropic position for $c\geq3/4$.
Let $S \subset H$ be a uniformly sampled subset of size $|S| \ge \Omega(d\log d)$.
Then $S$ is in $(1/2)$-approximate isotropic position with probability at least $1/2$.
\end{proposition}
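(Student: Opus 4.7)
The plan is to derive Proposition~\ref{prop:isotropicsample} as essentially a direct application of the matrix Chernoff bound already packaged in Claim~\ref{c:chern}. Concretely, since $H$ is in $c$-approximate isotropic position with $c \ge 3/4$, I would invoke the first of the two parameter instantiations spelled out immediately after that claim, namely $c \ge 3/4$ together with $(1-\delta)c = 1/2$; this choice has been pre-baked into the clean bound~\eqref{eq:chern}. Thus the probability that $S$ fails to be in $(1/2)$-approximate isotropic position is at most
\[
d \cdot \left(\tfrac{99}{100}\right)^{k/d}.
\]
Taking $k \ge C\,d\log d$ for a sufficiently large absolute constant $C$ drives this probability below $1/2$, which is exactly what the hypothesis $|S| \ge \Omega(d\log d)$ gives.

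The one technicality I would address is that Claim~\ref{c:chern} is stated for indices drawn \emph{independently} and uniformly from $[m]$ (i.e.\ sampling with replacement), whereas Proposition~\ref{prop:isotropicsample} samples a uniformly random \emph{subset} of size $k$ (without replacement). I would handle this via the standard observation---due to Gross and Nesme for matrices, and going back to Hoeffding in the scalar case---that concentration of a sum of random matrices under sampling without replacement is at least as sharp as under sampling with replacement. Applied to the spectrum of $\frac{1}{k}\sum_{h\in S} h h^{\mathrm{T}}$, this transfers the tail bound from Claim~\ref{c:chern} verbatim to the without-replacement setting, so nothing in the calculation above changes.

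I do not foresee a serious obstacle: the proposition is essentially a bookkeeping step whose quantitative content is already contained in Claim~\ref{c:chern}. The two things to double-check carefully are (i) the with-/without-replacement reduction cited above, and (ii) that the constant hidden in $|S| \ge \Omega(d\log d)$ is chosen large enough so that the geometric factor $(99/100)^{k/d}$ overwhelms the union-bound factor of $d$ and leaves the final probability below $1/2$.
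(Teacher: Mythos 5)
Your proof is correct and matches the paper's, which likewise invokes Claim~\ref{c:chern} via the instantiation $c \ge 3/4$, $(1-\delta)c = 1/2$ packaged into~\eqref{eq:chern}, and concludes by choosing the constant in $k = \Omega(d\log d)$ large enough to drive $d\cdot(99/100)^{k/d}$ below $1/2$. Your note about the with-/without-replacement mismatch between Claim~\ref{c:chern} and the proposition is a legitimate extra care the paper glosses over, and the Gross--Nesme transfer you cite does close that small gap.
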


\begin{proof}
The proof follows from \Cref{c:chern} by plugging  $k=\Omega(d \log d)$ in \Cref{eq:chern} and calculating that the bound on the right hand side becomes at least~$1/2$.
\end{proof}

\begin{proposition}\label{prop:inference0}
Let $x \in \R^d$, $S \subset \R^d$ be in (1/2)-approximate isotropic position, where $|S| \ge \Omega\left(d^2\log d\right)$.
Let $h \in S$ be sampled uniformly. Then
\[\Pr_{h \in S}\left[h\in\infer\left(S \setminus \{h\} ; x\right)\right]\geq\frac{1}{20d}.\]
\end{proposition}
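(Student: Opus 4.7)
The plan is to combine the approximate-isotropic structure of $S$ with a margin-based inference-dimension argument in the style of \cite{kane2017active}.

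First, after normalizing so that $\|x\|=1$, I would identify the subset of ``large-margin'' vectors $L := \{h \in S : |\langle h, x\rangle| \geq 1/(2\sqrt{d})\}$. By \Cref{c:manymargin} applied with $c=1/2$, one has $|L| \geq |S|/(4d)$, and in particular $|L| = \Omega(d \log d)$ under the hypothesis $|S| \geq \Omega(d^2 \log d)$.

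Second, I would show that at least a fifth of the $h \in L$ lies in $\infer(S \setminus \{h\}; x)$. Since $\infer(\,\cdot\,;x)$ is monotone under set inclusion, and $L\setminus\{h\}\subseteq S\setminus\{h\}$ for $h\in L$, it suffices to prove this with $\infer(L \setminus \{h\}; x)$ in place of $\infer(S \setminus \{h\}; x)$. For this I would invoke the margin-based inference-dimension analysis of \cite{kane2017active}: the comparison-inference dimension of any set of unit vectors in $\R^d$ having margin at least $\gamma$ at $x$ is $O(d \log(1/\gamma))$, and a standard VC-type double-sampling argument then guarantees that within any set of size a constant factor larger than this dimension, a constant fraction of the elements are inferred by the others using comparisons. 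Plugging in $\gamma = 1/(2\sqrt{d})$ yields inference dimension $O(d \log d)$, which is precisely why we require $|L| = \Omega(d \log d)$, and hence $|S| = \Omega(d^2 \log d)$.

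Third, combining the two bounds yields
\[
\Pr_{h \in S}\bigl[h \in \infer(S \setminus \{h\}; x)\bigr] \;\geq\; \frac{|L|/5}{|S|} \;\geq\; \frac{1}{20\,d},
\]
which is the desired conclusion.

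The principal obstacle is the second step: the margin-based inference analysis in \cite{kane2017active} is stated for classes where \emph{every} vector has large margin at $x$, whereas here only the subset $L \subseteq S$ enjoys this property. My plan circumvents this by first restricting to $L$ (on which the uniform margin property holds) and then free-lifting the inference guarantee to $S \setminus \{h\}$ via the monotonicity of $\infer$. A secondary subtlety is that the double-sampling step requires the host set to have size a super-constant factor above the inference dimension; this is what forces the precise bound $|S| \geq \Omega(d^2 \log d)$, and careful bookkeeping of the constants (combined with the $1/(4d)$-fraction margin guarantee) yields the stated $1/(20d)$ probability rather than merely $\Theta(1/d)$.
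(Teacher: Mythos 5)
Your plan identifies the right ingredients (Claim~\ref{c:manymargin} to find a $\frac{1}{4d}$-fraction of large-margin vectors; a Kane et al.-style packing/inference argument; a $\Theta(1/d)\cdot\Theta(1)$ bookkeeping at the end), but the central step — restricting to $L$ and invoking the margin-based inference dimension of~\cite{kane2017active} — has a genuine gap. The notion $\infer$ in this paper is \emph{unconditional}: $h\in\infer(S';x)$ requires $\sign(\ip{h,y})=\sign(\ip{h,x})$ for \emph{every} $y\in\R^d$ whose sign pattern on $S'\cup(S'-S')$ agrees with that of $x$. To close the inference, one must lower-bound the margin of the anchor vector $h_1$ against every such $y$, not merely against $x$. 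Knowing $h_1\in L$, i.e.\ $|\ip{h_1,x}|\ge 1/(2\sqrt d)$, gives no control over $\ip{h_1,y}$: there are $y$'s consistent with all the sign/comparison constraints for which $\ip{h_1,y}$ is arbitrarily small. So the hypothesis of the Kane et al.\ analysis (uniform margin of the inferring point against the class) cannot be certified for the $y$'s that matter, and monotonicity ($\infer(L\setminus\{h\};x)\subseteq\infer(S\setminus\{h\};x)$), while correct, does not help because $h\in\infer(L\setminus\{h\};x)$ is precisely what fails.

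The paper gets around exactly this obstacle in Claim~\ref{c:inference1} by never passing to a margin-defined subset of $S$. Instead it uses that $S'=S\setminus\{h_a\}$ is still in $\ge 1/4$-approximate isotropic position, applies Claim~\ref{c:manymargin} \emph{to $S'$ and an arbitrary unit $y$}, and concludes that the top-$(t-1)$ elements in the sorted order — an order determined by the comparison queries and hence identical for every $y$ in the cone — have margin $\ge\eps/2$ with respect to every such $y$. This is an unconditional margin bound extracted from the geometry of $S$ itself, which is external information the algorithm has, whereas $L$ depends on the hidden $x$, is not computable from the queries, and need not be anywhere near isotropic position. The paper flags this distinction explicitly: "The analysis in \cite{kane2017active} assumed that all vectors have large margin, where here we need to work under the weaker assumption that only a noticeable fraction of the vectors have large margin." The mechanism that replaces the uniform-margin hypothesis — deriving a margin guarantee for the top-sorted block of $S'$ from isotropic position — is the missing idea in your proposal.
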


\begin{proof}
We may assume that $x$ is a unit vector, namely $\|x\|_2=1$. Let $s=|S|$ and assume that $S=\{h_1,\ldots,h_s\}$ with
\[
\ip{h_1,x} \ge \ip{h_2,x} \ge \ldots \ge \ip{h_s,x} \geq 0.
\]
Set $\eps=\frac{1}{2 \sqrt{d}}$. As $S$ is in (1/2)-approximate isotropic position, \Cref{c:manymargin} gives that $\ip{h_i,x} \ge \eps$ for
at least $|S|/4d$ many $h_i \in S$. Set $t=|S|/8d$ and define
\[
T=\{h_1,\ldots,h_t\},
\]
where by out assumption $\ip{h_t,x} \ge \eps$. Note that in this case, we can compute $T$ from comparison queries on $S$. We will show that
\[
\Pr_{h \in T}\left[h\in\infer\left(S \setminus \{h\} ; x\right)\right]\geq\frac{1}{2},
\]
from which the proposition follows. This in turn follows by the following two claims, whose proof we present shortly.

\begin{claim}\label{c:inference1}
Let $h_a \in T$. Assume that there exists a non-negative linear combination $v$ of $\{h_{i} - h_{i+1}: i=1,\ldots,a-2\}$ such that
$$
\| h_a - (h_1 + v)\|_2 \le \eps/4.
$$
Then $h_a \in \infer\left(S \setminus \{h_a\} ; x\right)$.
\end{claim}

\begin{claim}\label{c:inference2}
The assumption of \Cref{c:inference1} holds for at least half the vectors in $T$.
\end{claim}

Clearly, \Cref{c:inference1} and \Cref{c:inference2} together imply that for at least half of $h_a \in T$, it holds that
$h_a \in \infer\left(S \setminus \{h_a\} ; x\right)$. This concludes the proof of the proposition.
\end{proof}

Next we prove \Cref{c:inference1} and \Cref{c:inference2}.

\begin{proof}[Proof of \Cref{c:inference1}]
Let $S'=S \setminus \{h_a\}$ and $T'=T \setminus \{h_a\}$.
As $S$ is in (1/2)-approximate isotropic position then $S'$ is in $c$-approximate isotropic position for $c=1/2 - d/|S|$. In particular,
as $|S| \ge 4d$ we have $c \ge 1/4$. By applying comparison queries to $S'$ we can sort $\{\ip{h_i,x}: h_i \in S'\}$. Then $T'$ can be computed as the set
of the $t-1$ elements with the largest inner product. \Cref{c:manymargin} applied to $S'$ then implies that $\ip{h_i, x} \ge \eps/2$ for all $h_i \in T'$.
Crucially, we can deduce this just from the comparison queries on $S'$, together with our initial assumption that $S$ is in (1/2)-approximate isotropic position.
Thus we deduced from our queries that:
\begin{itemize}
\item $\ip{h_1,x} \ge \eps/2$.
\item $\ip{v,x} \ge 0$.
\end{itemize}
In addition, from our assumption it follows that $|\ip{h_a - (h_1+v),x}| \le \eps/4$. These together infer that $\ip{h_a,x}>0$.
\end{proof}

The proof of \Cref{c:inference2} follows from the applying the following claim iteratively. We note that this claim appears in \cite{kane2017active}
implicitly, but we repeat it here for clarity.

\begin{claim}\label{c:inference3}
Let $h_1,\ldots,h_t \in \R^d$ be unit vectors. For any $\eps>0$, if $t \ge 16d \ln (2d/\eps)$ then there exist $a \in [t]$
and $\alpha_1,\ldots,\alpha_{a-2} \in \{0,1,2\}$ such that
\[
h_{a} = h_1 + \sum_{j=1}^{i-2} \alpha_j (h_{j+1} - h_j) + e,
\]
where $\|e\|_2\leq \eps$.
\end{claim}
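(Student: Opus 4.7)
The plan is to deduce the claim via a volume-based pigeonhole applied to partial sums of the differences $\delta_j := h_{j+1} - h_j$.

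First I would introduce the $2^{t-2}$ vectors
\[
w(\alpha) \;:=\; h_1 + \sum_{j=1}^{t-2} \alpha_j\, \delta_j, \qquad \alpha \in \{0,1\}^{t-2},
\]
and note that since $\|h_1\| = 1$ and $\|\delta_j\|_2 \leq 2$, every $w(\alpha)$ lies in the ball $B(0,R)$ of radius $R \leq 2t$. A standard volume bound caps the number of pairwise $\eps$-separated points in $B(0,R)$ by $(1 + 2R/\eps)^d$. I would then verify that the hypothesis $t \geq 16 d \ln(2d/\eps)$ forces $2^{t-2} > (1 + 2R/\eps)^d$, so by pigeonhole there exist distinct $\alpha, \alpha' \in \{0,1\}^{t-2}$ with $\|w(\alpha) - w(\alpha')\|_2 \leq \eps$.

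Next, set $\gamma_j := \alpha_j - \alpha'_j \in \{-1, 0, 1\}$, which are not all zero, giving $\bigl\|\sum_{j=1}^{t-2} \gamma_j\, \delta_j\bigr\|_2 \leq \eps$. Let $k$ be the largest index with $\gamma_k \neq 0$, so $\gamma_k \in \{\pm 1\}$; rearranging expresses $\delta_k$ as a $\{-1, 0, 1\}$-combination $\sum_{j < k} \beta_j\, \delta_j$ of the earlier differences, up to additive error $\eps$.

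Finally, I would set $a := k + 1 \in [t]$. Since $h_a = h_1 + \sum_{j=1}^{a-2} \delta_j + \delta_{a-1}$ and $\delta_{a-1} = \delta_k \approx \sum_{j < k} \beta_j\, \delta_j$, this yields
\[
h_a \;=\; h_1 + \sum_{j=1}^{a-2} (1 + \beta_j)\, \delta_j + e
\]
with $\|e\|_2 \leq \eps$. Putting $\alpha_j := 1 + \beta_j \in \{0, 1, 2\}$ matches the claimed form.

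The main obstacle is the packing-number arithmetic in the first step: one needs $(t-2) \ln 2 > d \ln(1 + 4t/\eps)$, i.e., $t \gtrsim d \ln(t/\eps)$. Substituting the hypothesis gives $\ln t = \ln d + \ln \ln(2d/\eps) + O(1)$, which is dominated by $\ln(2d/\eps)$, so the constant $16$ leaves ample slack; the rest is straightforward algebraic manipulation.
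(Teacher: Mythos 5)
Your argument is correct and follows essentially the same route as the paper's: a volume/packing pigeonhole over $\{0,1\}$-combinations of the consecutive differences $\delta_j = h_{j+1}-h_j$ to find two combinations within $\eps$ of each other, followed by taking the largest index where the two combinations disagree. The only cosmetic difference is that you pack Euclidean $\eps/2$-balls inside $B(0,R)$ while the paper packs translates of $\frac{\eps}{4}C$ inside $tC$ where $C$ is the convex hull of the $\delta_j$'s; both yield the same bound (and your ball-based count is arguably a touch cleaner since it is insensitive to whether the $\delta_j$'s span $\R^d$).
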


In order to derive \Cref{c:inference2} from \Cref{c:inference3}, we assume that $|T| \ge 32 d \ln ((2d)/(\eps/4)) = \Omega(d \log d)$. Then
we can apply \Cref{c:inference3} iteratively $|T|/2$ times with parameter $\eps/4$, at each step identify the required $h_a$, remove
it from $T$ and continue. Next we prove \Cref{c:inference3}.

\begin{proof}[Proof of \Cref{c:inference3}]
Let $\B:=\{h \in \R^d: \|h\|_2 \le 1\}$ denote the Euclidean ball of radius $1$,
and let $C$ denote the convex hull of $\{h_2-h_1,\ldots,h_t-h_{t-1}\}$.
Observe that $C \subset 2 \B$, as each $h_i$ is a unit vector.
For $\beta \in \{0,1\}^{t-1}$ define
\[
h_{\beta} = \sum \beta_j (h_{j+1}-h_j).
\]
We claim that having $t \geq 16d \ln (2d/\eps)$ guarantees that there exist distinct $\beta',\beta''$ for which
\[
h_{\beta'} - h_{\beta''} \in \frac{\eps}{4} (C-C).
\]
This follows by a packing argument: if not, then the sets $h_{\beta}+\frac{\eps}{4} C$ for $\beta\in\{0,1\}^{t-1}$ are mutually disjoint.
Each has volume $(\eps/4)^d \vol(C)$, and they are all contained in $t C$ which has volume $t^d \vol(C)$.
As the number of distinct $\beta$ is $2^{t-1}$ we obtain that
$2^{t-1} (\eps/4)^d \le t^d$, which contradicts our assumption on $t$.

Let $i \in [t]$ be maximal such that $\beta'_{i-1} \ne \beta''_{i-1}$.
We may assume without loss of generality that $\beta'_{i-1} = 0, \beta''_{i-1}=1$, as otherwise we can swap the roles of $\beta'$ and $\beta''$. Thus we have
$$
\sum_{j=1}^{i-1} (\beta'_j - \beta''_j) (h_{j+1} - h_j) \in \frac{\eps}{4} (C-C) \subset \eps \B.
$$
Adding $h_{i} - h_1 = \sum_{j=1}^{i-1} (h_{j+1}-h_j)$ to both sides gives
$$
\sum_{j=1}^{i-1} (\beta'_j - \beta''_j+1) (h_{j+1} - h_j) \in h_{i} - h_1 + \eps \B,
$$
which is equivalent to
$$
h_{i} - h_1 \in \sum_{j=1}^{i-1} (\beta'_j - \beta''_j+1) (h_{j+1} - h_j) + \eps \B.
$$
The claim follows by setting $\alpha_j = \beta'_j - \beta''_j+1$ and noting that by our construction $\alpha_{i-1}=0$, and hence the sum terminates at $i-2$.
\end{proof}

\subsection{A deterministic LDT for $H$ in general position}
\label{sec:proof2}

In this section, we derandomize the algorithm from the previous section.
We still assume that $H$ is in general position,
this assumption will be removed in the next sections.
\begin{lemma}\label{lem:genposdet}
Let $H\subseteq\R^d$ be a finite set in general position.
Then there exists an LDT that computes $\A_H(\cdot)$
with $O\left(d^4 \log d \log \lvert H\rvert\right)$ generalized comparison queries.
\end{lemma}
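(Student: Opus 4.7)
The plan is to derandomize the algorithm of \Cref{lem:genposition} via a Vapnik--Chervonenkis style double-sampling argument; the factor-of-$d$ loss will come from needing, at each iteration, to probe several candidate subsets in order to guarantee that at least one of them makes progress for every possible input $x$. I would begin by recalling the per-iteration guarantee implicit in the proof of \Cref{lem:genposition}: after Forster's rescaling, \Cref{lem:mainstep} says that a uniformly random subset $S$ of size $k=\Theta(d^2\log d)$ inferences, in expectation, at least a $1/(40d)$-fraction of the remaining labels at $x$. A reverse Markov inequality then yields, for each fixed $x$, that a random $S$ is ``$x$-good''---meaning it inferences at least a $1/(80d)$-fraction of the remaining labels at $x$---with probability at least $1/(80d)$.

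The core step is to show, for each iteration, the existence of a deterministic collection $\mathcal{S}=\{S^{(1)},\ldots,S^{(m)}\}$ of size-$k$ subsets of $H'_i$ with $m=O(d)$ (up to logarithmic factors) such that for \emph{every} $x\in\R^d$ at least one $S^{(j)}\in\mathcal{S}$ is $x$-good. Such $\mathcal{S}$ would be produced by the probabilistic method: drawing $m$ subsets independently and uniformly, the probability that none of them is $x$-good is at most $(1-1/(80d))^m \le e^{-m/(80d)}$; a union bound over the relevant equivalence classes of $x$, controlled by a VC-dimension bound on the range space $\{R_x : x\in\R^d\}$ where $R_x=\{S : S \text{ is $x$-good}\}$, then closes the argument for an appropriate $m=O(d)$. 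Given $\mathcal{S}$, one iteration of the deterministic tree queries each $S^{(j)}$ in turn---label queries plus an in-subset sort via Fredman's algorithm at cost $O(k+d\log k)=O(d^2\log d)$ queries per subset---for a total of $O(d^3\log d)$ generalized comparison queries. Since one $S^{(j)}$ is guaranteed $x$-good, the $\Omega(1/d)$-fraction inference rate is preserved, the number of iterations stays at $O(d\log |H|)$ (the outer structure of the algorithm, including the recomputation of Forster transforms, is unchanged), and the total depth is $O(d^4\log d \log|H|)$.

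The principal obstacle is achieving $m=O(d)$ in the double-sampling step. A naive union bound over all cells of the arrangement induced by $H_i\cup(H_i-H_i)$ gives $|H|^{O(d)}$ equivalence classes of $x$ and yields only $m=O(d^2\log|H|)$, resulting in a depth of $O(d^5\log d\log^2|H|)$, which overshoots the target by a factor of $d\log|H|$. The sharper bound must instead exploit a proper VC-dimension argument: whether $S$ is $x$-good depends on $x$ only through the $O(k^2)$ signs of the label and comparison queries within $S$, and the range space $\{R_x\}$ then has VC-dimension small enough for the classical $\epsilon$-net theorem to apply with $\epsilon=\Theta(1/d)$. This is technically more delicate than the analogous derandomization in \cite{kane2017active}, because the inferred fraction per iteration is only $\Omega(1/d)$ rather than a constant, so the $\epsilon$-net threshold is correspondingly small and the VC-analysis has to be sharper.
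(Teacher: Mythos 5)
Your high-level plan (derandomize \Cref{lem:genposition} via a VC-style argument, pay a factor of $d$) matches the paper's, but the concrete mechanism you propose is different and has a gap you yourself flag but do not close. You want, per iteration, a family $\mathcal{S}=\{S^{(1)},\ldots,S^{(m)}\}$ of $m=O(d)$ subsets each of size $k=\Theta(d^2\log d)$ such that for every $x$ some $S^{(j)}$ is $x$-good, and you propose to get $m=O(d)$ by a union bound over $x$-equivalence classes together with a VC/$\epsilon$-net bound. This does not work out: the event ``no $S^{(j)}$ is $x$-good'' depends on $x$ through the signs of the $O(mk^2)$ queries arising from all $m$ subsets together, so the number of relevant cells is $\bigl(O(mk^2)\bigr)^d$, and the probabilistic-method condition $e^{-m/(80d)}\cdot(O(mk^2))^d<1$ forces $m=\Omega(d^2\log d)$, not $O(d)$. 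Reframing this as an $\epsilon$-net question gives the same answer: the range space $\{R_x\}$ (over the ground set of size-$k$ subsets) has VC-dimension $\Theta(d\log d)$ (one can shatter $n$ subsets only if $2^n\le (O(nk^2))^d$), so the $\epsilon$-net theorem with $\epsilon=\Theta(1/d)$ gives $m=O(d^2\,\mathrm{polylog}(d))$. Either way the per-iteration cost is $mk=\Omega(d^4)$, and the total depth overshoots $O(d^4\log d\log|H|)$ by about a factor of $d$.

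The paper closes this gap by a genuinely different use of double-sampling. Rather than a family of small subsets, it draws a single $S\subseteq H$ of size $s=O(d^3\log d)$ and shows (\Cref{lem:derand}) that with positive probability $|\infer(S;x)\cap H|\ge |H|/(100d)$ for \emph{every} $x$. The crucial point is that the union bound over $x$ is taken not against cells of the arrangement of $H\cup(H-H)$ (which is what you are implicitly doing and which costs $\log|H|$ or worse), nor against cells induced by all $m$ subsets, but only against cells induced by $T\cup(T-T)$ for a random sub-subsample $T\subseteq S$ of size $t=\Theta(d^2\log d)$. The auxiliary event $F(S,T)$ is designed so that conditioned on any bad $S$, the sub-subsample $T$ witnesses the failure with probability $\Omega(1/d)$ (via \Cref{lem:mainstep} applied to $T$ inside $S$ and reverse Markov), while for any fixed $T$ the event has probability $O(1/d)\cdot(20t)^{-2d}$ by Chernoff plus Sauer, which after the $(20t)^{2d}$ union bound is $O(1/d)$. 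The two estimates then contradict each other unless $\Pr[E]<1$. This is the step your proposal is missing: by making the second random draw a \emph{sub}sample of $S$, the paper never needs to union-bound over sign patterns of anything larger than $T$, and $t$ is small enough ($\mathrm{poly}(d)$, independent of $|H|$ and of the final sample size $s$) that the exponent $s/d$ in the Chernoff tail can beat $(20t)^{2d}$ once $s=\Theta(d^3\log d)$. If you want to salvage your multi-subset variant you would need an analogous trick, e.g.\ union-bound only over sign patterns of a small sub-subsample of the $S^{(j)}$'s, but at that point you have essentially reproduced the paper's argument with extra bookkeeping.
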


Note that the this bound is worse by a factor of $d$ than the one in \Cref{lem:genposition}. In \Cref{open:derand}
we ask whether this loss is necessary, or whether it can be avoided by a different derandomization technique.

 \Cref{lem:genposdet} follows by derandomizing the algorithm from \Cref{lem:genposition}.
Recall that \Cref{lem:genposition} boils down to showing that
$h\in\infer(S_i;x)$ for an $\Omega(1/d)$ fraction of $h\in H_i$ on average.
In other words, for every input vector $x$, \emph{most of the subsets $S_i\subseteq H'_i$} of size $\Omega(d^2 \log d)$
allow to infer from comparisons the labels of some $\Omega(1/d)$-fraction of the points in $H_i$.
We derandomize this step by showing that
\emph{there exists a universal set $S_i\subseteq H'_i$} of size $O(d^3 \log d)$
that allows to infer the labels of some $\Omega(1/d)$-fraction of the points in $H_i$,
with respect to \emph{any $x$}. This is achieved by the next lemma.
%

\begin{lemma}\label{lem:derand}
Let $H\subseteq \R^d$ be a set of unit vectors in isotropic position.
Then there exists~$S\subseteq H$ of size $O(d^3\log d)$
such that
\[\left(\forall x\in\R^d\right): \left\lvert \infer(S;x)\cap H \right\rvert \geq \frac{\lvert H\rvert}{100d}.\]
\end{lemma}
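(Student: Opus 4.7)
The plan is to take $S$ as a uniformly random $K$-subset of $H$ with $K = \Theta(d^3 \log d)$, and to prove, via a double-sampling argument in the spirit of Vapnik--Chervonenkis~\cite{vapnik1971uniform}, that with positive probability the random $S$ meets the required guarantee simultaneously for every $x \in \R^d$.

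First I set up the random sampling. Write $S$ as a disjoint union of $m = \Theta(d)$ independent uniform $k$-subsets $S_1, \dots, S_m$ of $H$, where $k = \Theta(d^2 \log d)$ is the size used in \Cref{lem:mainstep}. The monotonicity $\infer(S_i; x) \subseteq \infer(S; x)$ reduces the task to showing that for every $x$ at least one $S_i$ infers an $\Omega(|H|/d)$-fraction of $H$. For a single pair $(x,i)$, \Cref{lem:mainstep} combined with Markov's inequality yields $\Pr[|\infer(S_i; x) \cap H| \geq |H|/(80d)] \geq 1/(80d)$, and independence of the $S_i$'s then bounds the failure probability at a fixed $x$ by $(1 - 1/(80d))^m$.

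The hard part is handling the uncountable quantification over $x \in \R^d$. A naive union bound over the $O(|H|^{2d})$ sign-types of $x$ with respect to $H \cup (H - H)$ would cost a $\log|H|$ factor which is not allowed in the target size $O(d^3 \log d)$. To remove this dependence I employ double sampling: introduce an independent copy $S'$ of $S$ and show that the bad event for $S$ implies, up to a constant factor, a symmetrized event phrased only in terms of $T := S \cup S'$. The idea is that by a Chernoff estimate the empirical fraction of $T$ inferred at $x$ is close, uniformly in $x$, to the true fraction of $H$ inferred, so that the original threshold $|H|/(100d)$ can be replaced by an empirical threshold on $T$. Once this reduction is done, the symmetrized event depends on $x$ only through its sign pattern on the arrangement $T \cup (T-T)$, whose number of cells is at most $(2K)^{O(d)}$; crucially, this bound is independent of $|H|$.

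The final step is a union bound: conditioning on $T$ and randomizing over the bipartition of $T$ into $(S, S')$, the per-cell failure probability for the symmetrized event is $(1 - 1/(80d))^m$ (by the same independence computation as in Step 1, carried out in the bipartition model). Multiplying by the $(2K)^{O(d)}$ upper bound on the number of cells and choosing $m = \Theta(d)$ with a sufficiently large constant makes the product strictly less than $1$, yielding a realization $S$ with $|S| = mk = O(d^3 \log d)$ satisfying the conclusion.

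The main obstacle is the local-to-global step inside the symmetrization: carefully arguing that for every $x$ simultaneously the empirical fraction of $T$ inferred matches the true fraction of $H$ inferred within a multiplicative constant. This is what consumes the extra factor of $d$ in the sample size $K$ compared to the $k$ used in \Cref{lem:mainstep}, and it is the technical heart of the derandomization; it is also the source of the factor-$d$ gap discussed in \Cref{open:derand}.
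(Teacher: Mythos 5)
Your high-level plan matches the paper's: a VC-style double-sampling argument, a Sauer-type bound to reduce the union over $x\in\R^d$ to a finite union over cells of a small arrangement, and \Cref{lem:mainstep} to control per-cell probabilities. However, the specific double-sampling scheme you chose leads to a quantitative failure at the final union bound, and this is a genuine gap, not just a constant to tighten.

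Your per-cell failure probability is $(1-1/80d)^m$ with $m=\Theta(d)$. By $(1-x)^n\le e^{-nx}$ this is $e^{-\Theta(1)}$, a \emph{constant}. Against it you must union-bound over the cells of the arrangement generated by $T\cup(T-T)$ with $|T|=2K=\Theta(d^3\log d)$, and by Sauer's lemma this count is $\left(O(K)\right)^{2d}=\exp\bigl(\Theta(d\log d)\bigr)$. A constant times $\exp\bigl(\Theta(d\log d)\bigr)$ is not less than $1$. To make $(1-1/80d)^m$ beat $\exp\bigl(\Theta(d\log d)\bigr)$ you would need $m=\Omega(d^2\log d)$, i.e.\ $|S|=mk=\Omega(d^4\log^2 d)$, which overshoots the target $O(d^3\log d)$. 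So the block decomposition $S=S_1\cup\dots\cup S_m$ with $m=\Theta(d)$ simply does not generate a small enough per-cell error.

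The paper's proof avoids this by subsampling a \emph{small} ghost sample $T\subset S$ of size $t=\Theta(d^2\log d)$, rather than symmetrizing with a fresh $S'$ of full size. Two things come for free from $T\subset S$: (i) monotonicity $\infer(T;x)\subseteq\infer(S;x)$ means that if $|\infer(S;x)\cap H|$ is small then so is $|\infer(T;x)\cap H|$, with probability one; and (ii) the union bound only needs to range over cells of $T\cup(T-T)$, still $\exp\bigl(\Theta(d\log d)\bigr)$ cells. The per-cell bound is then obtained by Chernoff on the binomial $|\infer(T;x)\cap(S\setminus T)|$, whose mean is $\Theta\bigl((s-t)/d\bigr)$; this gives $\exp\bigl(-\Theta(t/d)\bigr)=\exp\bigl(-\Theta(d\log d)\bigr)$, which (with correctly chosen constants) beats the cell count. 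This Chernoff-on-counts mechanism is the piece missing from your argument: your per-cell estimate comes from the ``at least one good block'' computation, which only gains a factor of $e^{-1/80d}$ per block and is far too weak.

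Two smaller remarks. First, with $T=S\cup S'$ the monotonicity $S\subset T$ goes the wrong way: small $|\infer(S;x)\cap H|$ does not imply small $|\infer(T;x)\cap H|$, so the reduction to a statement ``phrased only in terms of $T$'' is not automatic and needs care. Second, applying \Cref{lem:mainstep} inside the conditional (bipartition) model requires the conditioning set to be in approximate isotropic position; the paper handles this by introducing the auxiliary event that $S$ is in $(3/4)$-approximate isotropic position and applying the matrix Chernoff bound (\Cref{c:chern}). Your write-up applies \Cref{lem:mainstep} only to $H$ itself (in Step 1), but once you condition on $T$ in the symmetrized world, the relevant ``population'' is a random sample, and its approximate isotropy must be established explicitly.
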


\begin{proof}
We use a variant of the double-sampling argument due to~\cite{vapnik1971uniform}
to show that a random $S\subseteq H$ of size $s=O(d^3\log d)$ satisfies the requirements.
Let $S=\{h_1,\ldots,h_s\}$ be a random (multi-)subset of size $s$,
and let $E=E(S)$ denote the event
\[E(S) := \left[ \exists x\in\R^d : \left\lvert\infer(S;x)\cap H\right\rvert < \lvert H\rvert/100d\right].\]
Our goal is showing that $\Pr[E] < 1$.
To this end we introduce an auxiliary event $F$.
Let $t=\Theta(d^2\log d)$, and let $T=\{h_1,\ldots,h_t\}\subseteq S$ be a subsample of $S$,
where each $h_i$ is drawn uniformly from $S$ and independently of the others.
Define $F=F(S,T)$ to be the event
\begin{align*}
F(S,T) := \Bigl[ \exists x\in\R^d :&\left\lvert\infer(T;x)\cap H\right\rvert < \lvert H\rvert/100d \text{ and}\\
 &\left\lvert\infer(T;x)\cap S\right\rvert \geq \lvert S\rvert/50d
  \Bigr].
\end{align*}
The following claims conclude the proof of \Cref{lem:derand}.
\begin{claim}\label{c:uc1}
If $\Pr[E] \ge 9/10$ then $\Pr[F] \ge 1/200 d$.
\end{claim}

\begin{claim}\label{c:uc2}
$\Pr[F] \le 1/250d$.
\end{claim}

This concludes the proof, as it shows that $\Pr[E] < 9/10$. We next move to prove \Cref{c:uc1} and \Cref{c:uc2}.

\begin{proof}[Proof of \Cref{c:uc1}]
Assume that $\Pr[E] \ge 9/10$. Define another auxiliary event $G=G(S)$ as
\[G(S) := \left[ \text{$S$ is in (3/4)-approximate isotropic position} \right].\]
Applying \Cref{c:chern} by plugging $m\geq 100d\ln(10d)$ in \Cref{eq:chern} gives that $\Pr[G] \ge 9/10$, which implies
that $\Pr[E \wedge G] \ge 8/10$. Next, we analyze $\Pr[F | E \wedge G]$.

To this end, fix $S$ such that both $E(S)$ and $G(S)$ hold. That is: $S$ is in $(3/4)$-approximate isotropic position, and there
exists $x=x(S) \in \R^d$ such that $|\infer(S;x) \cap H| < |H|/100d$. If we now sample $T \subset S$,
in order for $F(S,T)$ to hold, we need that (i) $|\infer(T;x) \cap H| < |H|/100d$ , which holds with probability one,
as $|\infer(S;x) \cap H| < |H|/100d$; and (ii) that $|\infer(T;x) \cap S| \ge |S|/50d$. So, we analyze this event next.

Applying \Cref{lem:mainstep} to the subsample $T$ with respect to $S$ gives that
$$
\E_T\left[|\infer(T;x) \cap S|\right] \ge |S|/40d.
$$
This then implies that
$$
\Pr\left[|\infer(T;x) \cap S| \ge |S|/100d \right] \ge 1/100d.
$$

To conclude: we proved under the assumptions of the lemma that $\Pr_S[E(S) \wedge G(S)] \ge 8/10$; and that for every $S$ which satisfies $E(S) \wedge G(S)$
it holds that $\Pr_T[F(S,T) | S] \ge 1/100d$. Together these give that $\Pr[F(S,T)] \ge 1/200 d$.
\end{proof}

\begin{proof}[Proof of \Cref{c:uc2}]
We can model the choice of $(S,T)$ as first sampling $T \subset H$ of size $t$, and then sampling $S \setminus T \subset H$ of size $s-t$.
We will prove the following (stronger) statement: for any choice of $T$,
\[\Pr \left[F(S,T)|T\right] < 1/250d.\]
So from now on, fix $T$ and consider the random choice of $T'=S \setminus T$. We want to show that:
\begin{align*}
\Pr_{T'}\Bigl[ (\exists x\in\R^d) : &\left\lvert\infer(T;x)\cap H\right\rvert < \lvert H\rvert/100d \text{ and }\\
&\left\lvert\infer(T;x)\cap S\right\rvert \geq \lvert S\rvert/50d.\Bigr]\leq 1/250d.
\end{align*}
We would like to prove this statement by applying a union bound over all $x\in \R^d$.
However, $\R^d$ is an infinite set and therefore a naive union seems problematic.
To this end we introduce a suitable equivalence relation
that is based on the following observation.
\begin{obs}
$\infer(T;x)$ is determined by $\sign(\ip{h,x})$ for $h \in T \cup (T-T)$.
\end{obs}
We thus define an equivalence relation on $\R^d$
where $x\sim y$ if and only if $\sign(\ip{h,x})=\sign(\ip{h,y})$ for all $h \in T \cup (T-T)$.
Let $C$ be a set of representatives for this relation.
Thus, it suffices to show that
\begin{align*}
\Pr_{T'}\Bigl[ (\exists x\in C) : &\left\lvert\infer(T;x)\cap H\right\rvert < \lvert H\rvert/100d \text{ and }\\
&\left\lvert\infer(T;x)\cap S\right\rvert \geq \lvert S\rvert/50d.\Bigr]\leq 1/250d.
\end{align*}
Since $C$ is finite, a union bound is now applicable.
Sepcifically, it is enough to show that
\begin{align*}
(\forall x\in C) : \Pr_{T'}\Bigl [&\left\lvert\infer(T;x)\cap H\right\rvert < \lvert H\rvert/100d \text{ and }\\
&\left\lvert\infer(T;x)\cap S\right\rvert \geq \lvert S\rvert/50d.\Bigr]\leq \frac{1}{250d|C|}.
\end{align*}
Now, (a variant of) Sauer's Lemma (see e.g.\ Lemma 2.1 in~\cite{kane2017near}) implies that
\begin{equation}
\lvert C \rvert\leq \left(2e\cdot\left\lvert T \cup (T-T) \right\rvert \right)^d\leq \left(2e \cdot t^2 \right)^d \leq (20t)^{2d}.
\end{equation}\label{eq:sauer}

Fix $x\in C$. If $\left\lvert\infer(T;x)\cap H\right\rvert \geq \frac{\lvert H\rvert}{100d}$ then we are done
(note that $\infer(T;x)$ is fixed since it depends only on $T$ and $x$ and not on $T'$).
So, we may assume that $\left\lvert\infer(T;x)\cap H\right\rvert < \frac{\lvert H\rvert}{100d}$. Then we need to bound
\[
\Pr\left[\left\lvert\infer(T;x)\cap S\right\rvert \geq \frac{\lvert S\rvert}{50d} \right]
\leq \Pr\left[\left\lvert\infer(T;x)\cap T'\right\rvert \geq \frac{\lvert T'\rvert}{75d} \right],
\]
where the inequality follows if $t\leq \frac{s}{150d}$,
which can be satisfied since $t=\Theta(d^2\log d)$ and $s=\Theta(d^3 \log d)$.
To bound this probability we use the Chernoff bound:
let $p=\frac{\lvert \infer(T;x)\cap H \rvert}{\lvert H\rvert}$;
note that $\lvert \infer(T;x)\cap T' \rvert$ is distributed like $\mathsf{Bin}(s-t,p)$.
By assumption, $p\leq\frac{1}{100d}$, and therefore:
\begin{align*}
\Pr\left[\left\lvert\infer(T;x)\cap T'\right\rvert \geq \frac{\lvert T'\rvert}{75d} \right]
\leq
\exp\left(-\frac{(1/3)^2\cdot(t/100d)}{3}\right)
\leq
\frac{1}{250d\cdot (20t)^{2d}}
\leq
\frac{1}{250d\cdot \lvert C\rvert},
\end{align*}
where the second inequality follows because $t=\Theta(d^2\ln(d))$ with a large enough constant,
and the last inequality follows by \Cref{eq:sauer}.
\end{proof}
\end{proof}

\subsection{An LDT for every $H$ that is correct almost everywhere}
\label{sec:proof3}

In the next two sections we extend the generalized comparison decision tree
to arbitrary sets $H$.
Let $H\subseteq \R^d$ be an arbitrary finite set (not necessarily in a general position).
The first step is to use a compactness argument
to derive a decision tree that computes $\A_H(x)$ for \emph{almost every} $x$ in the following sense.
Recall that a linear decision tree $T$ computes the function $x\to\sign\left(\ip{h,x}\right)$
almost everywhere if this function is constant on every full dimensional leaf of $T$.

\begin{lemma}\label{lem:almosteverywhere}
Let $H\subseteq\R^d$ be a finite set.
Then there exists a generalized comparison LDT
of depth $O\left(d^4 \log d \log \lvert H\rvert\right)$
that computes $\A_H(\cdot)$ almost everywhere.
\end{lemma}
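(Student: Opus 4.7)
The plan is to reduce to the general-position case via a perturbation-and-compactness argument. For each positive integer $n$, fix a perturbation $(v_h^{(n)})_{h\in H}\subset \R^d$ with $\|v_h^{(n)}\|\le 1/n$ such that the set $H_n := \{h + v_h^{(n)} : h\in H\}$ is in general position (a generic perturbation suffices). By \Cref{lem:genposdet} there is a generalized comparison LDT $T_n$ of depth at most $D := O(d^4 \log d \log |H|)$ that computes $\A_{H_n}(\cdot)$.

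Each $T_n$ is described by four pieces of data: (i) a ternary tree shape of depth at most $D$; (ii) at each internal node $v$, an ordered pair $(h',h'')\in H\times H$, using the canonical bijection $h \leftrightarrow h + v_h^{(n)}$ to identify $H$ with $H_n$; (iii) at each internal node $v$, a pair of scalars $(\alpha_v^{(n)},\beta_v^{(n)})$ lying in the compact set $\{(\alpha,\beta)\in\R^2 : |\alpha|+|\beta|=1\}$; and (iv) at each leaf, a label in $\{-1,0,+1\}^H$. The data in (i), (ii), and (iv) range over finite sets, while (iii) ranges over a compact set. Passing to a subsequence, I may assume the combinatorial data is constant in $n$ and that $(\alpha_v^{(n)},\beta_v^{(n)})\to (\alpha_v^*,\beta_v^*)$ at every node $v$. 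Let $T^*$ be the resulting limit tree, interpreted as a generalized comparison LDT over $H$ with query direction $q_v^* := \alpha_v^* h' - \beta_v^* h''$ at each internal node $v$. Then $T^*$ has depth at most $D$ and uses only generalized comparison queries on~$H$.

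It remains to verify that $T^*$ computes $\A_H(\cdot)$ almost everywhere. Fix a full-dimensional leaf $\ell$ of $T^*$ and an $h\in H$, and suppose toward contradiction that $x\mapsto \sign(\ip{h,x})$ is not constant on $C(\ell)$. By \Cref{lem:all} there exist $x_-,x_+\in C(\ell)$ with $\sign(\ip{h,x_\pm})=\pm 1$. Full-dimensionality of $\ell$ means every query $q_v^*$ on the root-to-$\ell$ path was answered by ``$+$'' or ``$-$'', so $\sign(\ip{q_v^*,x_\pm})\in\{\pm 1\}$ matches the corresponding branch for both $x_-$ and $x_+$. The perturbed query vectors $q_v^{(n)} := \alpha_v^{(n)}(h'+v_{h'}^{(n)}) - \beta_v^{(n)}(h''+v_{h''}^{(n)})$ converge to $q_v^*$, and the signs involved are strict, so for all sufficiently large $n$ both $x_-$ and $x_+$ take the same branches in $T_n$ that they do in $T^*$, reaching the corresponding leaf $\ell^{(n)}$ of $T_n$. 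But $T_n$ correctly computes $\A_{H_n}$, so $\sign(\ip{h+v_h^{(n)},\cdot})$ must be constant on $C(\ell^{(n)})$; yet $\sign(\ip{h+v_h^{(n)},x_\pm})\to \sign(\ip{h,x_\pm})=\pm 1$, giving the desired contradiction for large~$n$.

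The main obstacle is setting up the compactness step carefully: parameterizing the space of depth-$D$ generalized comparison LDTs as a finite union of compact sets, and ensuring that the combinatorial identification between $T^*$ and $T_n$ survives the passage to the limit. A minor subtlety is that the limit tree may have degenerate queries (e.g.\ $q_v^*=0$), but such a node forces every input into the ``$0$'' branch, so the ``$\pm$'' children become unreachable and their descendant leaves are never full-dimensional; hence the sign-stability argument on the full-dimensional leaves suffices.
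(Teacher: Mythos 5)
Your proof is correct and follows essentially the same route as the paper: perturb $H$ to a sequence of general-position sets $H_n$, apply \Cref{lem:genposdet} to get depth-$D$ trees $T_n$, extract a convergent subsequence (using that the combinatorial data is finite and the normalized coefficients live in a compact set, matching the paper's footnote on normalization), and then show via the sign-stability argument and \Cref{lem:all} that a non-constant $\sign(\ip{h,\cdot})$ on a full-dimensional leaf of the limit tree would force a non-constant sign on the corresponding leaf of $T_n$ for large $n$. The paper packages the sign-stability step as the inclusion $C_\infty(\ell)\subseteq\bigcup_j\bigcap_{k\ge j}C_{n_k}(\ell)$, which you instead verify directly for the two witness points $x_-,x_+$; this is an equivalent formulation of the same idea.
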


\begin{proof}
If $H$ is in general position then this follows from Lemma~\ref{lem:genposdet}.
So, assume that $H$ is not in general position.
For every $n\in\N$, pick $H_n \subset \R^d$ with $\lvert H_n\rvert = \lvert H\rvert$
in general position such that for every $h\in H$ there exists $h_n=h_n(h)\in H_n$
with $\| h- h_n\|_2\leq 1/n$.
By Lemma~\ref{lem:genposdet} each $H_n$ has a generalized comparisons
tree $T_n$ of depth $D = O\left(d^4 \log d \log \lvert H\rvert\right)$ that computes $\A_{H_n}(\cdot)$.
A standard compactness\footnote{For the compactness argument to carry,  each generalized comparison query is normalized so that its coefficient $\alpha,\beta$ are bounded, e.g.\ $ \alpha + \beta  =1$}
argument imply the existence of a sequence of isomorphic trees $\{T_{n_k}\}_{k=1}^\infty$,
such that for every vertex $v$, the sequence of
the $H_{n_k}$-generalized comparisons queries corresponding
to $v$ converges to an $H$-generalized comparison query.
Let $T_\infty$ denote the limit tree.
One can verify that $T_\infty$ satisfies the following property:
\begin{equation}\label{eq:compact}
C_\infty(\ell)\subseteq \bigcup_{j=1}^\infty\bigcap_{k=j}^\infty C_{n_k}(\ell), \text{ for every full dimensional leaf $\ell$ of $T_\infty$.}
\end{equation}
In words, every $x\in C_\infty(\ell)$ belongs to all except finitely
many of the $C_{n_k}(\ell)$.
We claim that \Cref{eq:compact} implies that $T_\infty$
computes $\sign(\ip{h,\cdot})$ almost everywhere, for every $h\in H$.
Indeed, let $\ell$ be a full dimensional leaf of $T_\infty$, and let $x',x''\in C_\infty(\ell)$.
Assume towards contradiction that $\sign(\ip{h,x'})\neq\sign(\ip{h,x''})$ for some $h\in H$.
By Corollary~\ref{lem:all} we may assume that $\sign(\ip{h,x'}) = -1$
and $\sign(\ip{h,x''})=+1$.
By \Cref{eq:compact}, both $x',x''$ belong to all but finitely many of the $C_{n_k}(\ell)$.
Moreover, since $\sign(\ip{h,x'}),\sign(\ip{h,x''})\neq 0$ and $h_{n_k}(h)\to_{k\to\infty} h$ it follows that
$\sign(\ip{h,x'})=\sign(\ip{h_{n_k}(h),x'})$, and $\sign(\ip{h,x''})=\sign(\ip{h_{n_k}(h),x''})$
for all but finitely many~$k$'s. Thus, for such $k$'s the function $\sign(\ip{h_{n_k},\cdot})$
is not constant on $C_{n_k}(\ell)$, which contradicts the assumption that $T_{n_k}$
computes $\sign(\ip{h_{n_k},\cdot})$.
\end{proof}

\subsection{An LDT for every $H$}
\label{sec:proof4}

In this section we derive the final generalized comparison decision tree for arbitrary $H$,
which implies \Cref{thm:main}.
This is achieved by the next lemma that derives the final tree
from the one in \Cref{lem:almosteverywhere}.

\begin{lemma}
\label{lem:LDT_fix}
For every LDT $T$ there exists an LDT $T'$ such that
\begin{itemize}
\item $T'$  uses the same queries as $T$ and has the same depth as $T$.
\item For every $h\in\R^d$, if $T$ computes $\sign(\ip{h,\cdot})$ almost everywhere
then $T'$ computes $\sign(\ip{h,\cdot})$ everywhere.
\end{itemize}
\end{lemma}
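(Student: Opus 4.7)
The plan is to construct $T'$ by modifying $T$ so that the special treatment of ``$0$'' answers is eliminated. Concretely, I would recursively transform $T$ from the leaves upward: at every internal node $u$ of $T$ with query $q$, I would replace the subtree rooted at the $0$-child by a copy of the (already modified) subtree rooted at the $+$-child. Labels at each leaf $\ell'$ of $T'$ would then be reassigned to the value of $\sign(\ip{h,\cdot})$ on the \emph{true} cell $C(\ell')$ (defined by the actual signs of on-path queries, including any $0$'s).

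The structural conditions would be verified first. By construction, the set of queries appearing in $T'$ is a subset of the queries appearing in $T$, since the $0$-subtrees are discarded and we only reuse queries from $T$'s $\pm$-subtrees; and the depth satisfies $\text{depth}(T') \le \max(\text{depth}(T^-_u),\text{depth}(T^+_u))+1 \le \text{depth}(T)$ at every internal node $u$, with any slack padded by a dummy repetition of an existing query.

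The heart of the argument is showing that the labels are well-defined, i.e., that $\sign(\ip{h,\cdot})$ is constant on $C(\ell')$ for every leaf $\ell'$ of $T'$. I would prove first the following cell-level lemma: under the hypothesis, for every cell $c$ of the arrangement $\mathcal{A}(Q)$ (where $Q$ is the full query set of $T$), $\sign(\ip{h,\cdot})$ cannot simultaneously attain $+$ and $-$ on $c$. Indeed, suppose $x_+,x_- \in c$ with $\sign(\ip{h,x_\pm}) = \pm$, and perturb both by the same generic vector $v$ to form $\tilde x_\pm = x_\pm + \epsilon v$. For small $\epsilon$, the two perturbed points have identical signs on every query in $Q$ (on queries with nonzero sign on $c$ the perturbation doesn't flip the sign, and on queries with zero sign on $c$ both perturbed points acquire the sign $\sign(\ip{q,v})$). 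Thus $\tilde x_\pm$ lie in the same full-dimensional cell $c'$ of $\mathcal{A}(Q)$, which is contained in some full-dimensional leaf of $T$; by hypothesis $\sign(\ip{h,\cdot})$ is constant there, contradicting $\sign(\ip{h,\tilde x_\pm}) = \pm$.

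Combined with the recursive structure of $T'$, the cell-level observation implies that the cell $C(\ell')$ of each $T'$-leaf is a union of cells of $\mathcal{A}(Q)$ with compatible signs, so the label is well-defined and correct. The step I expect to be the main obstacle is precisely the refinement argument showing that the $0$-to-$+$ substitution does not produce a leaf whose cell unites cells of $\mathcal{A}(Q)$ with conflicting signs. The delicate cases are when the hypothesis is consistent with $\sign(\ip{h,\cdot})$ attaining both $\{+,0\}$ (or $\{-,0\}$) on a single cell of $\mathcal{A}(Q)$; here I would invoke \Cref{lem:all} together with the continuity of $\ip{h,\cdot}$ to argue that such partial zeros coincide with the common label, so that the label assignment at $\ell'$ remains correct even on the ``boundary'' portion of $C(\ell')$.
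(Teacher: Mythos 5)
Your construction of $T'$ (replace the $0$-subtree by a copy of the $+$-subtree) and your bookkeeping on depth and query set match the paper's. Your cell-level lemma is also correct as stated: for a cell $c$ of $\mathcal{A}(Q)$, two points $x_+,x_-\in c$ agree in sign on \emph{every} query in $Q$, so perturbing both by the same generic $v$ keeps them together through the entire computation in $T$, landing them in the same full-dimensional leaf.

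The genuine gap is exactly the step you flag as the ``main obstacle,'' and your proposed fix does not address it. The problem is not cells on which $\sign(\ip{h,\cdot})$ attains $\{+,0\}$ or $\{-,0\}$; it is that $C(\ell')$ for a $T'$-leaf $\ell'$ is only a cell of the arrangement of the \emph{on-path} queries, hence a union of many cells of $\mathcal{A}(Q)$, and nothing in your cell-level lemma prevents one of those $\mathcal{A}(Q)$-cells from being entirely $+$ and another entirely $-$. A generic perturbation $v$ does not repair this: take $x',x''\in C(\ell')$ and perturb both by $\eps v$. At a path query $q_i$ answered $0$, the perturbed points acquire sign $\sign(\ip{q_i,v})$, which may be $-$; the computation in $T$ then enters the $-$-subtree of that node, whose queries are \emph{different} from the $+$-subtree queries that $T'$ copied. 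Off-path queries in $Q$ are exactly where $x'$ and $x''$ may disagree, so once the paths wander off the $T'$ route the two perturbed points can separate, and you never reach a single full-dimensional leaf of $T$ containing both.

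The paper's proof works at the leaf level directly and avoids the issue by choosing the perturbation direction deliberately rather than generically: letting $q_1,\dots,q_r$ be the path queries answered $0$, non-redundancy of $T$ ($C(u)\neq\emptyset$ for all $u$) forces these to be linearly independent, so one can pick $z$ with $\ip{q_i,z}=+1$ for all $i$. Then $y'=x'+\eps z$ and $y''=x''+\eps z$ both turn every $0$-answer into a $+$-answer and preserve every $\pm$-answer, so both trace exactly the path that $T'$ mimics and land in the \emph{same} full-dimensional leaf of $T$, yielding the contradiction. To salvage your writeup, replace the generic $v$ by this $z$, invoke non-redundancy to justify its existence, and argue at the level of $C(\ell')$ directly; the cell-level lemma can then be dropped.
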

\begin{proof}
Without loss of generality, we may assume that $T$ is not redundant, in the sense
that each query in it is informative. Namely, that $C(u)\neq\emptyset$ for every vertex $u\in T$.

\paragraph{Derivation of $T'$.}
Given an input point $x$, follow the corresponding computation path in $T$ with the following modification:
once a vertex $v$ whose  query $q=q(v)$ satisfies $\sign(\ip{q,x})=0$ is reached,
define in $T'$ a new child of $v$ that corresponds to this case, and continue following
the same queries like in the subtree of $T$ that corresponds to the case ``$\sign(\ip{q,x})=+$''.

\paragraph{Correctness.}
We prove that $T'$ computes $\sign(\ip{h,\cdot})$ everywhere.
Consider a leaf $\ell$ in $T'$, and let $x',x''\in C(\ell)$.
Assume toward contradiction that $\sign(\ip{h,x'})\neq \sign(\ip{h,x''})$.
By Corollary~\ref{lem:all} we may assume that $\sign(\ip{h,x'}) = -1$
and $\sign(\ip{h,x''})=+1$.
Let $q_1,\ldots,q_r$ denote the queries on the path towards $\ell$
whose query is replied by $0$. Since $T$ is not redundant,
it follows that the $q_i$'s are linearly independent.
Thus, there is a solution $z$ to the system $\ip{q_i,z} = +1$ for $1\leq i \leq r$.
Let $y' = x' +\eps\cdot z$ and $y''=x''+\eps\cdot z$ where $\eps>0$ is sufficiently small
such that
\begin{itemize}
\item[(i)] $\sign(\ip{h,x'})=-1$ and $\sign(\ip{h,x''})=+1$, and
\item[(ii)] $\sign(\ip{q,y'})=\sign(\ip{q,x'}) = \sign(\ip{q,x''})=\sign(\ip{q,y''})$
for every query $q$ on the path towards $\ell$ whose sign query is replied by a $\pm 1$.
\end{itemize}

Thus, by (ii) above and since $\eps>0$ it follows that $y',y''$ belong to the same full dimensional leaf of $T$.
However, (i) above implies that the function $x\mapsto \sign(\ip{h,x})$ is not constant on this leaf,
which contradicts the assumption on $T$.
\end{proof}

\bibliographystyle{alpha}
\bibliography{comparisons}

\end{document}